\newtheorem{theo+}           {Theorem}
\newtheorem{prop+}           {Proposition}
\newtheorem{coro+}           {Corollary}
\newtheorem{lemm+}           {Lemma}
\newtheorem{conjecture}      {Conjecture}
\theoremstyle{definition}
\newtheorem{defi+}           {Definition}
\newtheorem{problem}         {Problem}
\theoremstyle{remark}
\newtheorem{rema+}           {Remark}
\newenvironment{lemma}{\begin{lemm+}}{\end{lemm+}}
\newenvironment{remark}{\begin{rema+}}{\end{rema+}}
\newcommand{\al}{\alpha}
\newcommand{\be}{\beta}
\newcommand \De {\Delta}
\newcommand {\Ga} {\Gamma}
\newcommand {\ga} {\gamma}
\newcommand{\la}{\lambda}
\newcommand{\La} {\Lambda}
\newcommand{\Si}{\Sigma}
\newcommand{\bC}{\mathbb C}
\newcommand{\bR}{\mathbb R}
\newcommand{\D}{\mathcal D}
\newcommand{\C}{\mathcal C}
\def\newop#1{\expandafter\def\csname #1\endcsname{\mathop{\rm
#1}\nolimits}}
\begin{document}
          \numberwithin{equation}{section}
           
          \title [Algebraic spectrum of  quasi-exactly solvable sextic  oscillator]
          {Asymptotics and monodromy of the  algebraic spectrum of  quasi-exactly solvable sextic   oscillator}

\author[B.~Shapiro]{Boris Shapiro}
\address{Department of Mathematics, Stockholm University, SE-106 91
Stockholm,
         Sweden}
\email{shapiro@math.su.se}

\author[M.~Tater]{Milo\v{s} Tater}
\address{Department of Theoretical Physics, Nuclear Physics Institute, 
Academy of Sciences, 250\,68 \v{R}e\v{z} near Prague, Czech
Republic}
\email{tater@ujf.cas.cz}

\date{\today}
\keywords{spectrum of an anharmonic oscillator, spectral surface, monodromy} 
\subjclass[2000]{81Q10}

\begin{abstract}  Below we study   theoretically  and numerically the asymptotics of the algebraic part of the spectrum for  the  quasi-exactly solvable sextic potential $\Pi_{m,p,b}(x)=x^6+2bx^4+(b^2-(4m+3))x^2$, its  level crossing points,  and its monodromy   in the complex  plane of  parameter $b$. Here  $m$ is a fixed positive integer.   We also discuss connection between the quasi-exactly solvable sextic and the classical quartic potential.
\end{abstract} 

\maketitle

\section{Introduction}

To the best of our knowledge,  historically  first, and the  most  well-known example of a quasi-exactly solvable potential in quantum mechanics  is the quasi-exactly solvable sextic. It was originally discovered in \cite{SBD} (see also  \cite{Tu}, \cite{TuUsh},\cite{UshB}) and it  is given by: 
$$\Pi_{m,p,b}(x)= x^6+2bx^4+(b^2-(4m+2p+3))x^2,$$
where $m$ is a fixed positive integer, $p\in \{0,1\}$, and $b$ is an arbitrary complex number. In \cite{TuUsh} it was shown that, for any value of $b,$ the Schr\"odinger equation 
\begin{equation}\label{eq:Sch}
T=-\frac{d^2}{dx^2}+\Pi(x)=\la y
\end{equation}
 with the boundary conditions
$$y(\pm \infty)=0$$ 
on $\bR$,  
has $m+1$ eigenfunctions of the form
\begin{equation}\label{eq:ans}
\phi(x)=Q(x)e^{-\frac{x^4}{4}-\frac{bx^2}{2}},
\end{equation} 
where $Q(x)$ is an even (resp. odd) polynomial of degree $2m$ (resp. $2m+1$) for $p=0$ (resp. $p=1$.) The above eigenfunctions as well as  their eigenvalues can be found by a simple algebraic procedure presented below.  The latter  eigenvalues form the so-called {\em algebraic part} of the spectrum of $T$.  For any real value of parameter $b,$ these eigenvalues are real and distinct. A number of  their properties is discussed in e.g., \cite {BD}, \cite {BDM}, \cite {Tu}, \cite {Ush}.  

Let us  briefly recall how, for any value of $b$,  to  describe  the algebraic part of the spectrum explicitly.  Simple calculation shows that if an eigenfunction $\phi(x)$ of the  form \eqref{eq:ans} has an eigenvalue $\lambda,$ then its polynomial factor $Q(x)$  satisfies the differential equation: 
\begin{equation}\label{eq:init}
-Q^{\prime\prime}(x)+2(x^3+bx)Q'(x)-((4m+2p)x^2-b)Q(x)=\lambda Q(x).
\end{equation}
 The differential operator 
$$\mathfrak d=-\frac {d^2}{dx^2}+2(x^3+bx)\frac{d}{dx}-((4m+2p)x^2-b)$$ occurring in the l.h.s. of \eqref{eq:init} preserves  the $(m+1)$-dimensional linear space $V_{ev}$ of all  even polynomials of degree $\le 2m$  for $p=0$. For $p=1,$ it preserves the $(m+1)$-dimensional linear space $V_{odd}$ of all odd polynomials of degree $\le 2m+1$. For $p=0,$ using  $t=x^2$,   we can rewrite \eqref{eq:init}  in the form
\begin{equation}\label{eq:main} 
-4t\frac{d^2Q(t)}{dt^2}+(4t^2+4bt-2)\frac{dQ(t)}{dt}-(4mt-b)Q(t)=\la Q(t),
\end{equation}
which is a special case of the double-confluent Heun equation. 

Thus, the  algebraic part of the spectrum of the Schr\"odinger operator $T$  is simply the spectrum of  the operator $\mathfrak d$ restricted to  $V_{ev}$ for $p=0$ (resp. to $V_{odd}$ for $p=1$). Fixing the usual monomial basis $(1,x^2,x^4,...,x^{2m})$ in $V_{ev}$ and $(x,x^3,...,x^{2m+1})$ in $V_{odd},$ we can explicitly calculate the action of $\mathfrak d$ on the respective space. 

Below we will concentrate on  the case $p=0$. (Case $p=1$ is very similar.) Straight-forward calculation shows that,  for $p=0$, the $(m+1)\times(m+1)$-matrix $M_m(b)$ representing the action of  $\mathfrak d$ in the monomial basis $(1,t,t^2,\dots, t^n)$  of $V_{ev}$ coincides with
\begin{equation}\label{eq:matrix}
M_m(b)=\begin{pmatrix} b&-4m&0&0&0&...\\
                                    -1\cdot 2&5b&4-4m&0&0&...\\
                                    0&-3\cdot 4&9b&8-4m&0&... \\
                                    0&0&-5\cdot 6&13b&12-4m&...\\
                       
                                    \vdots&\vdots&  \vdots&\vdots&\vdots&\vdots\\

\end{pmatrix}.
\end{equation}

\medskip
 In what follows, we mainly study different asymptotic spectral properties of $M_m(b)$. 
 The structure of the  paper is as follows. In \S~2  we study the spectral asymptotics   of the sequence  $\{M_m(b)\}$ (after appropriate scaling)  as well as the properties of the  sequence of eigenpolynomials corresponding to a converging sequence of the latter eigenvalues.  Observe that most of the arguments presented \S~2 are not mathematically rigorous since we are missing proofs of several convergence statements. However our arguments provide a natural and numerically supported  heuristics. 
  In \S~3  we present our numerical results and conjectures about the level crossing points  and the monodromy of the spectrum of $M_m(b)$, when $b$ traverses  closed loops in the complex plane avoiding the level crossing points.  Based on our guesses we also formulate the explicit conjecture about the monodromy of the classical quartic potential studied in a large number of papers starting with \cite {BW}. 
  
  Observe that our ``results" below are similar to that of our previous article \cite{ShTaQu}, where the case of quasi-exactly solvable quartic was considered in some details. Although at present we do not know  how one can prove our guesses rigorously, due to their potential importance for physics and many surprising features, we were  advised  by a number scientists including  Professors B.~Simon and  C.~M.~Bender to make our ``results" available to the mathematics and physics communities.

\medskip
\noindent 
{\bf Acknowledgements.} The first author wants to thank Professors A.~Eremenko and A.~Gabrielov of Purdue University for numerous discussions. The first author is grateful to the Nuclear Physics Institute at \v{R}e\v{z} of the Czech Academy of Sciences  for the hospitality in November 2016. The second author acknowledges the hospitality of the Department of Mathematics, Stockholm university in April 2016.  His research was supported by the Czech Science Foundation (GACR) within the project 14-06818S.

\section{Spectral asymptotics} 

To study the characteristic polynomial of the tridiagonal matrix \eqref{eq:matrix}, we follow the circle of ideas developed in \cite{KvA} and use the characteristic polynomials of its principle minors. Namely, denote by 
$\De_m^{(i)},\; i=1,\dots , m+1$ the determinant of the $i$-th  principal minor of $\la I_{m+1}-M_m(b),$ where $I_{m+1}$ is the identity matrix of size $m+1$. This finite sequence of characteristic polynomials satisfies the recurrence relation
\begin{equation}\label{rec:init}
\De_m^{(i)}=(\la-(4i-3)b)\De_m^{(i-1)}-4(2i-2)(2i-3)(m+2-i)\De_m^{(i-2)},\;i=1,\dots, m+1
\end{equation}
with the initial conditions: $\De_m^{(-1)}=0$ and $\De_m^{(0)}=1$.  Observe that, for any fixed real $b,$  $\{\De_m^{(i)}\}_{i=0}^{m+1}$ is a sequence of  discrete orthogonal polynomials. 

Denoting by  $D_m(\la,b):=\De_m^{(m+1)}(\la,b)$, it was proven in \cite {ShTa1} that, for any fixed $b$, the maximal absolute value of the roots of $\D_m(\la,b)$ grows as $16m^{3/2}/3\sqrt{3} $  and the density of the asymptotic root distribution of the scaled polynomials $\{D_m(m^{3/2}\tilde \la,b)\}$ coincides with that of $\{D_m(m^{3/2}\tilde \la,0)\}$ and is given by the  integral
$$\frac{C}{\pi}\int_0^1\frac{d\tau} {\sqrt {64\tau(\tau-1)^2-C^2x^2}},$$
where $x\in[-C,C]$ and $C=16/3\sqrt{3}$; see also \cite {BDM}. 

Therefore, if we keep $b$ fixed and scale $\la$ as above, we get the same standard  limiting distribution.  Using the corresponding three-term recurrence relation, one can easily guess that an interesting dependence of the sequence of spectra on $b$ implying stabilisation of the spectral distribution might happen if one takes the  sequence of matrices 
\begin{equation}\label{eq2}
\widetilde {M}_m(b)=\begin{pmatrix} \frac{b}{m}&\frac{-4m}{m^{{3}/{2}}}&0&0&0&...\\
                                    \frac{-1\cdot 2}{m^{{3}/{2}}}&\frac{5b}{m}&\frac{4-4m}{m^{{3}/{2}}}&0&0&...\\
                                    0&\frac{-3\cdot 4}{m^{{3}/{2}}}&\frac{9b}{m}&\frac{8-4m}{m^{{3}/{2}}}&0&... \\
                                    0&0&\frac{-5\cdot 6}{m^{{3}/{2}}}&\frac{13b}{m}&\frac{12-4m}{m^{{3}/{2}}}&...\\
                                    \vdots&\vdots&  \vdots&\vdots&\vdots&\vdots\\
                                    \end{pmatrix},
\end{equation}
 which is equivalent to scaling $b_m=b m^{1/2}$ and $\widetilde \la=\la/m^{3/2}$ in  \eqref{eq:matrix}. In  terms of the original equation \eqref{eq:main}, we study  degree $m$ polynomial solutions   of the differential equation
\begin{equation}\label{eq:mod}
-4t\frac {d^2Q(t)}{dt^2}+(4t^2+4bm^{1/2}t-2)\frac{dQ(t)}{dt}-(4mt-bm^{1/2})Q(t)=\widetilde \la m^{3/2}Q(t). 
\end{equation}

The characteristic polynomials $\widetilde \De^{(i)}_m(\widetilde \la,b)$ of the principal minors of \eqref{eq2} satisfy the modified $3$-term recurrence 
\begin{equation}\label{rec:mod}
\widetilde\De^{(i)}_m=\left(\widetilde \la -\frac{4i-3}{m} b\right)\widetilde \De^{(i-1)}_m-\frac{4(2i-2)(2i-3)(m+2-i)}{m^3}\widetilde \De_m^{(i-2)},
\end{equation}
where $ \De^{(-1)}_m=0,\; \widetilde \De^{(0)}_m=1$. See examples in Fig.~\ref{fig0}.  

\begin{figure}
\begin{center}
\includegraphics[scale=0.55]{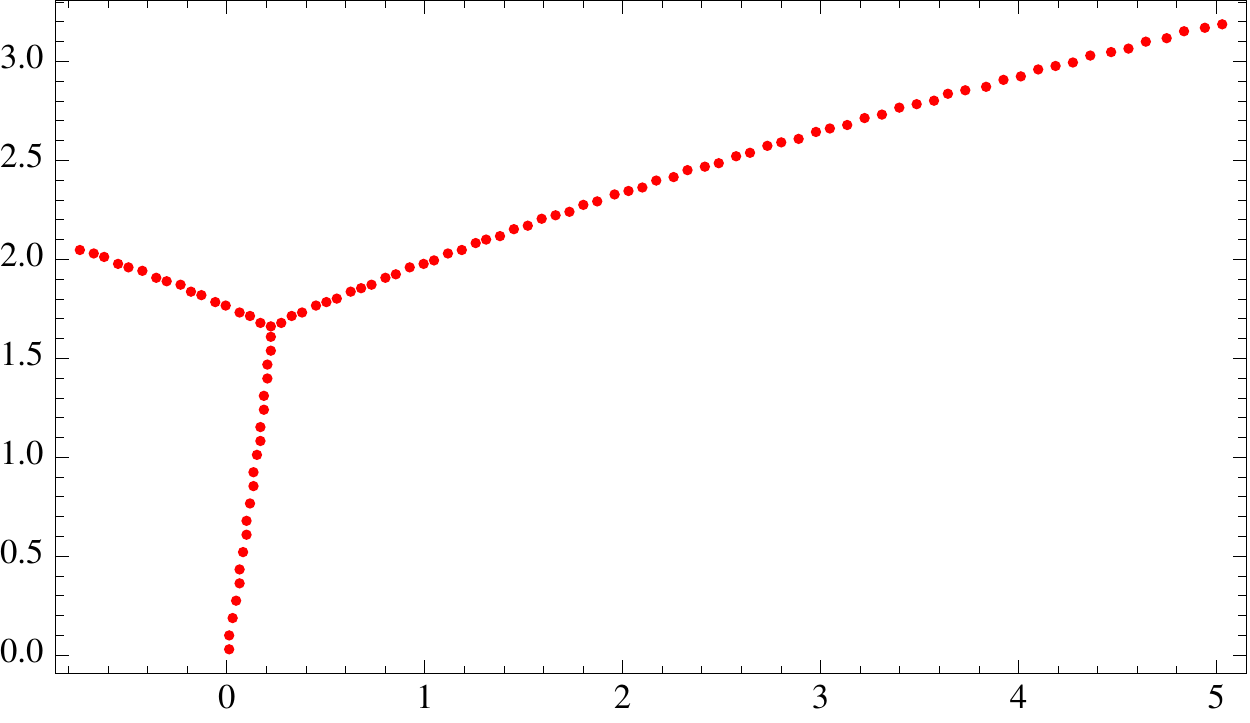} \includegraphics[scale=0.35]{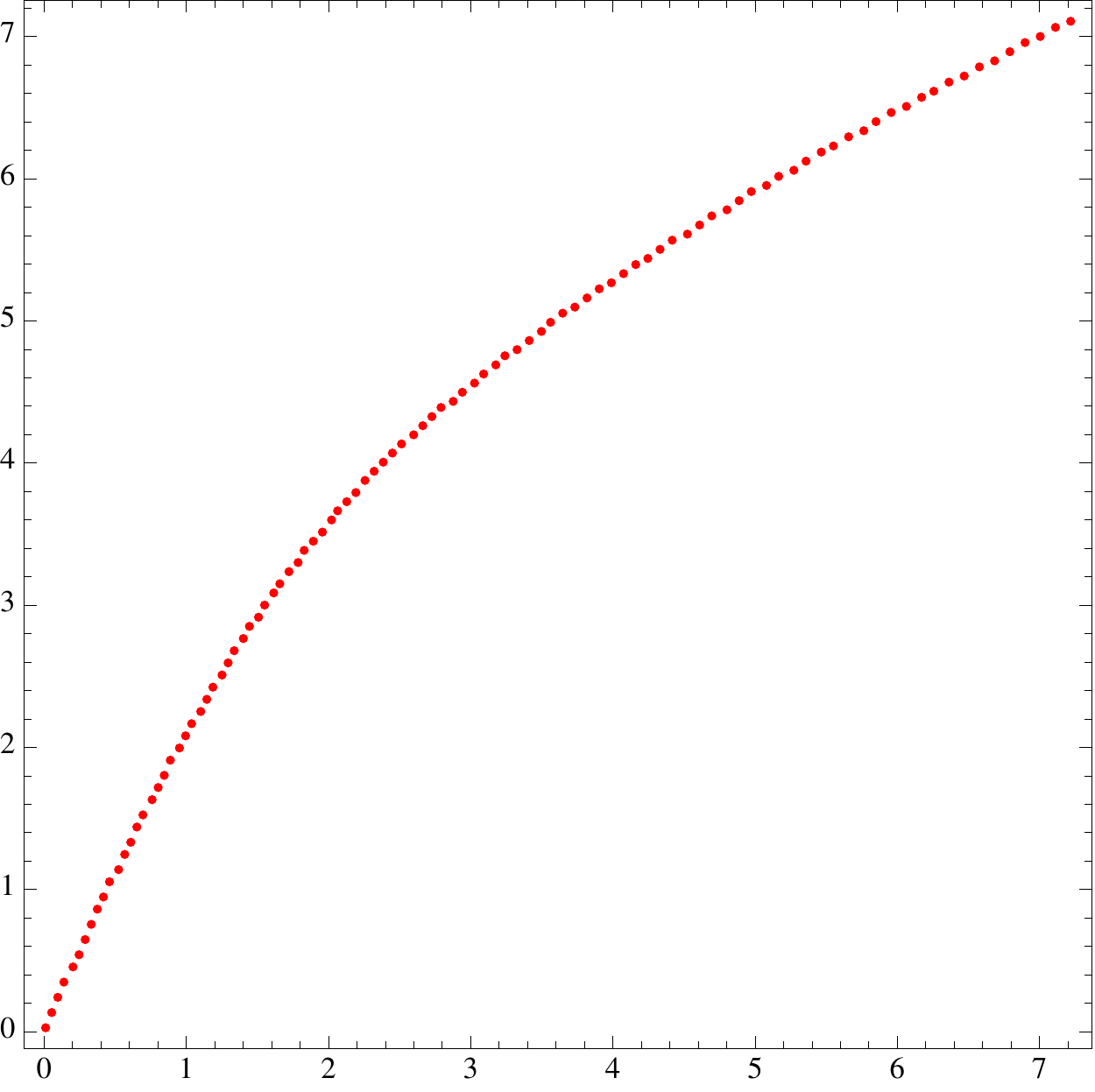}
\end{center}

\vskip 1cm

\caption{The spectra of $\widetilde M_{100}(b)$ for $b=(3/4+I)$ (left) and $b=3/2+2I$ (right).}
\label{fig0}
\end{figure}

Set $\widetilde D_m(\widetilde \la, b):=\widetilde \De_m^{(m+1)}(\widetilde \la, b)$. Below, for any given $b$,  we will study  the asymptotic root-counting measure $\mu_b$ of the polynomial sequence $\{\widetilde D_m(\widetilde \la, b)\}$.  By the main result of \cite{KvA}, the Cauchy transform of $\mu_b$ outside a certain bounded domain in $\bC$ can be calculated  by averaging the Cauchy transforms of polynomial sequences in a $1$-parameter family (depending on parameter $\tau\in [0,1]$) which is obtained from \eqref{rec:mod}  by taking the limit $\frac{i}{m} \to \tau$. In other words, we need to consider the one-parameter family of three-term recurrence relations of the form
 \begin{equation}\label{eq:param}
\De_\tau^{(i)}=(\widetilde \la-4\tau b)\De^{(i-1)}_\tau-4(2\tau)^2(1-\tau)\De_\tau^{(i-2)},\quad\quad \tau\in[0,1].
\end{equation} 
The characteristic equation  of \eqref{eq:param}  is given by 
$$\Psi^2=(\widetilde \la -4\tau b)\Psi-16\tau^2(1-\tau).$$
Its branch points with respect to $\Psi$ (i.e. the values of $\widetilde \la$ for which the latter characteristic equation has a double root with respect to $\Psi$) are determined by the relation 
$$(\widetilde \la-4\tau b)^2=64\tau^2(1-\tau).$$
In other words, 
\begin{equation}\label{endpoints}
\widetilde \la_{1,2}(\tau)=4\tau b\pm8\sqrt{\tau^2(1-\tau)},\quad \tau\in[0,1].
\end{equation}

By \cite{KvA},  in the complement of the domain traversed by the family of the straight segments $[\widetilde \la_{1}(\tau),\widetilde \la_{2}(\tau)]$ in $\bC,$ where $\tau$ runs over the interval $[0,1],$  the Cauchy transform of $\mu_b$ is given by the integral  formula 
\begin{equation}\label{eq:Cauchy}
\C_b(z)=\int_0^1\frac{d\tau} {\sqrt{(z-4\tau b)^2-64\tau^2(1-\tau)}}.
\end{equation}

If $b$ is real, then each interval $[\widetilde \la_{1}(\tau),\widetilde \la_{2}(\tau)]$ is real and one can show  that  
$$\bigcup_{\tau\in[0,1]}[\widetilde \la_{1}(\tau),\widetilde \la_{2}(\tau)]=\left[\frac{2}{27}\left(36b-b^3-\sqrt{(12+b^2)^3}\right),\frac{2}{27}\left(36b-b^3+\sqrt{(12+b^2)^3}\right)\right].$$
 The density of $\mu_b$ on the latter interval can be represented by the integral formula similar to the case $b=0$ given above. 

If $b=u+iv$ with $v\neq 0$, then the family of the above endpoints \eqref{endpoints} traverses the oval of the real rational  cubic $\Ga_b$ with a node at the origin given by the equation
\begin{equation}\label{cubic}\Ga_b: 
\left(x-\frac{uy}{v}\right)^2=\frac{(4v-y)y^2}{v^3},
\end{equation}
where  $x=\text{Re } z$ and $y=\text{Im } z,$ see examples in Fig.~\ref{fig2}. 
Explicit parameterisation of $\Ga_b$ as a rational curve is given by 
$$\begin{cases} x=\frac{2v\nu}{(1-\nu^2)^3} \left(4(1-\nu^2)^2-(2v\nu-u(1-\nu^2))^2\right)\\
                           y=\frac{v}{(1-\nu^2)^2}\left(4(1-\nu^2)^2-(2v\nu-u(1-\nu^2))^2\right).
\end{cases}
$$
(To get the oval, $\nu$ has to run between the roots of the equation $4 (1 - \nu^2)^2 - (2 v \nu - u (1 -
\nu^2))^2 = 0$.) 

\begin{figure}
\begin{center}
\includegraphics[scale=0.5]{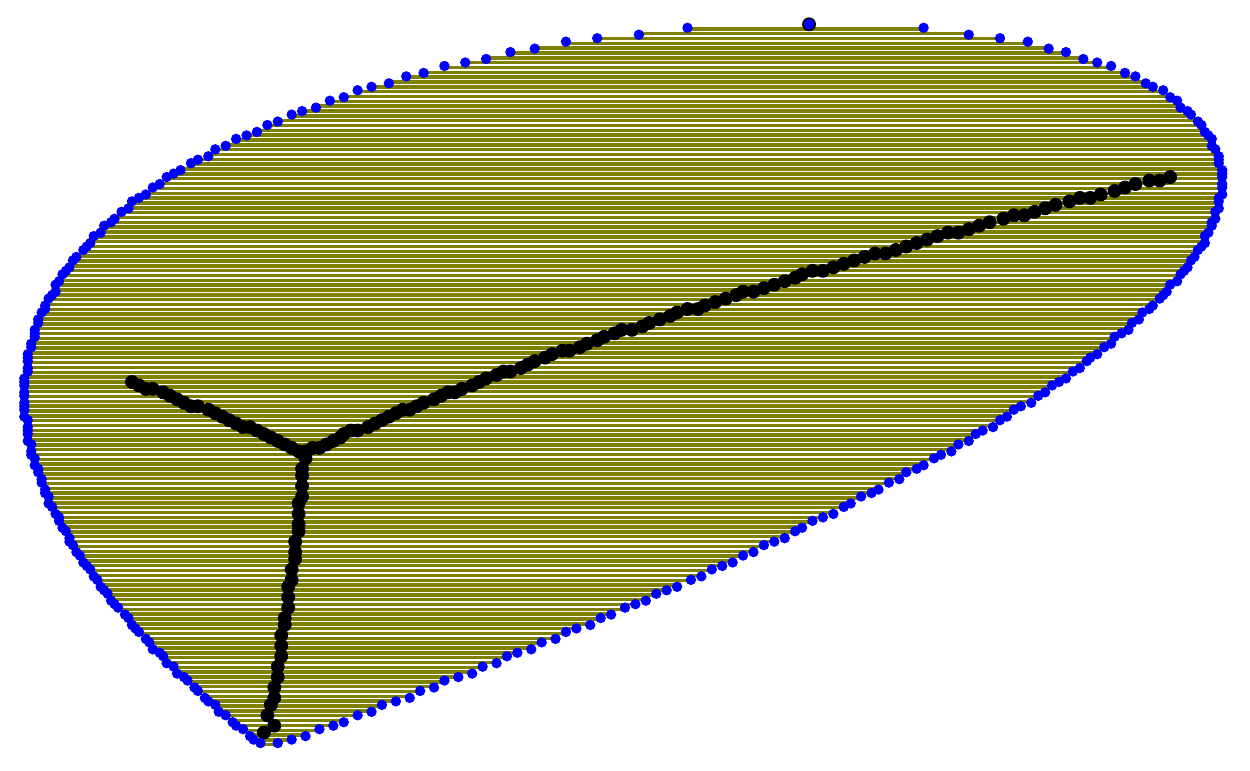} \includegraphics[scale=0.35]{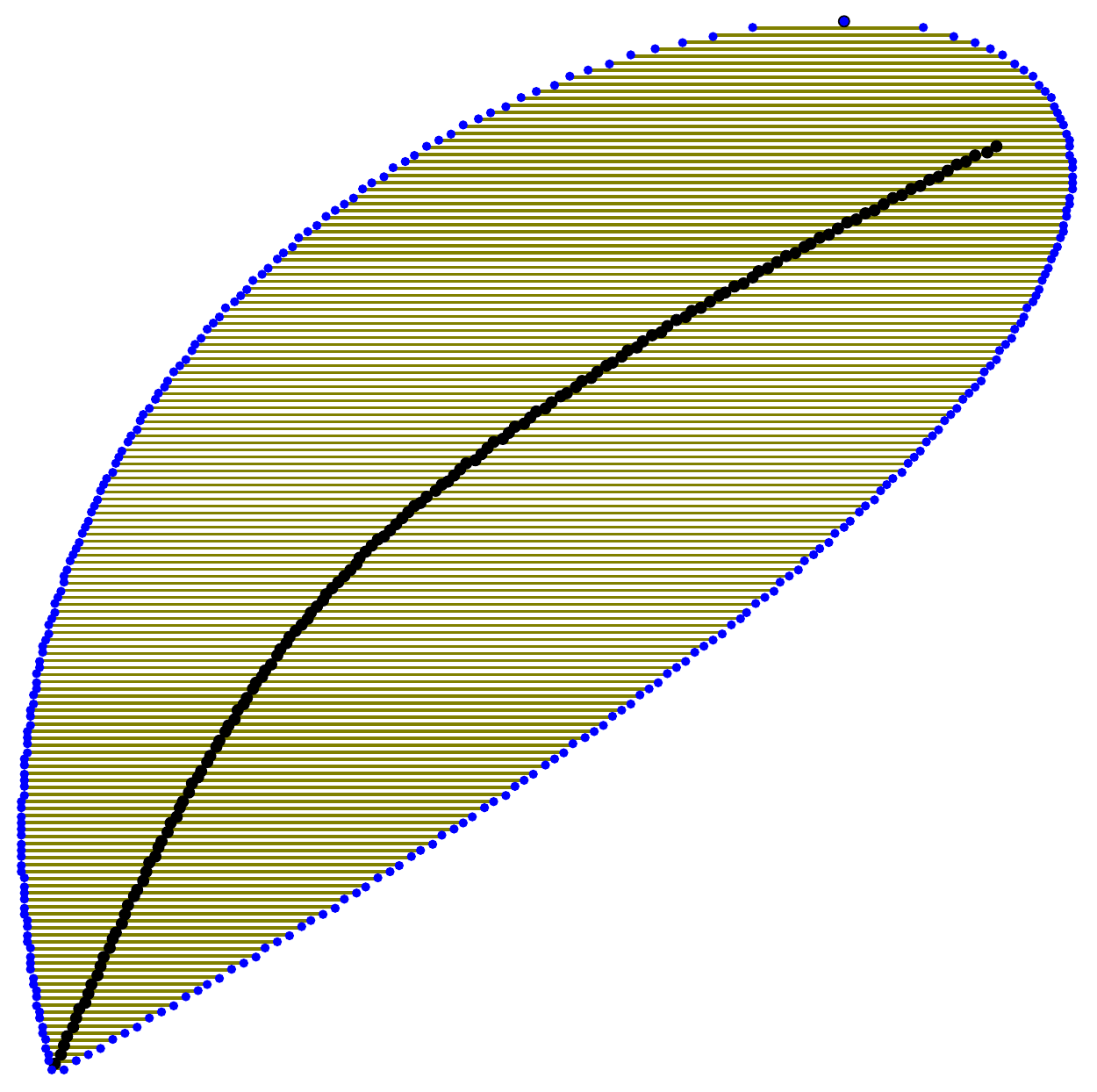}
\end{center}

\vskip 1cm

\caption{The spectra of $\widetilde M_{100}(b)$ for $b=(3/4+I)$ (left) and $b=3/2+2I$ (right) with the corresponding ovals and foci.}
\label{fig2}
\end{figure}

Recall that the general notion of (real) foci of real algebraic curves was developed by J.~Pl\"ucker around 1832   and these foci are the intersections with the real plane of the complex tangent lines to the complexification of the original curve passing through the so-called circular points at infinity. In the homogeneous coordinates $(X,Y,Z)$ of the plane $(x,y)$, these circular points  are given by 
$(1,\pm I, 0)$, see e.g. \cite{Sa}. The easiest way to find the foci of a real plane algebraic curve given by the equation $F(x,y)=0$ is to substitute $y=I x+f$ in $F$ and to calculate the discriminant of the resulting polynomial with respect to $x$. This leads to the polynomial equation in the variable $f$ whose roots are the foci of the original curve, see e.g. \cite{Em}. In the case of the  singular cubic $\Ga_b,$  we get  the equation
\begin{equation}\label{foci}
f^2(27f^2+4f(b^3-36b)-16(4-b^2)^2)=0.
\end{equation}
Equation~\eqref{foci} has a double root at the origin and two more foci $f_{1,2}$ given by
$$f_{1,2}=\frac{2}{27}\left(36b-b^3\pm \sqrt{(12+b^2)^3}\right).$$
Observe that,  in the case of real $b$,  $f_{1,2}$ coincide with the endpoints of the interval $\bigcup_{\tau\in[0,1]}[\widetilde \la_{1}(\tau),\widetilde \la_{2}(\tau)]$ obtained above. 
Our numerical results strongly support the following conjecture.

\begin{conjecture}
 Depending of the value of $b,$ the endpoints of the support of $\mu_b$ are either all three foci of $\Ga_b$ or just two of them always including the focus at the origin, see Fig.~\ref{fig2}.
 \end{conjecture}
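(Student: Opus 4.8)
The plan is to reduce the statement to one algebraic identity together with one structural fact about $\mu_b$. By \eqref{eq:Cauchy} the Cauchy transform is $\C_b(z)=\int_0^1 d\tau/\sqrt{P(\tau;z)}$, where $P(\tau;z)=(z-4\tau b)^2-64\tau^2(1-\tau)=64\tau^3+16(b^2-4)\tau^2-8bz\,\tau+z^2$; this is an elliptic integral of the first kind attached to the curve $E_z:\ w^2=P(\tau;z)$, which moves in a one-parameter family as $z$ varies. Consequently $\C_b$ continues analytically to a multivalued function whose branch points can occur only where $E_z$ degenerates --- i.e. where the cubic $P(\cdot;z)$ has a multiple root --- or where a root of $P(\cdot;z)$ collides with an endpoint $\tau\in\{0,1\}$ of the contour. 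The endpoint $\tau=1$ (which happens at $z=4b$) is harmless: the corresponding square-root singularity of the integrand lies at the end of the contour and is smeared out by the $\tau$-integration, so $z=4b$ is not a branch point of $\C_b$ (a short local computation: the would-be branch point moves off to $z=4b\pm8\sqrt{\delta}$ before the contour can be pinched). Granting the heuristic of \S~2 --- that $\mu_b$ exists, that its Cauchy transform is the global analytic continuation of \eqref{eq:Cauchy}, and that $\mathrm{supp}\,\mu_b$ is a finite union of analytic arcs along which $\mu_b$ has the standard hard-edge behaviour, so that every endpoint of $\mathrm{supp}\,\mu_b$ is a branch point of $\C_b$ --- the conjecture reduces to locating those branch points, i.e. the zeros in $z$ of $\mathrm{disc}_\tau P(\tau;z)$.

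Computing this discriminant is the clean core of the argument. A direct calculation gives
\begin{equation}\label{eq:discfact}
\mathrm{disc}_\tau P(\tau;z)=-4096\,z^2\bigl(27z^2+4b(b^2-36)z-16(b^2-4)^2\bigr),
\end{equation}
and the quadratic factor is exactly the nontrivial factor of the foci equation \eqref{foci}, with roots $f_{1,2}=\tfrac{2}{27}\bigl(36b-b^3\pm\sqrt{(12+b^2)^3}\bigr)$. Hence $E_z$ degenerates precisely for $z\in\{0,f_1,f_2\}$ (with $z=0$ a double zero of the discriminant, reflecting that the collision of roots there takes place at the contour endpoint $\tau=0$), so the branch points of $\C_b$, and a fortiori the endpoints of $\mathrm{supp}\,\mu_b$, lie in $\{0,f_1,f_2\}$ --- that is, among the three foci of $\Ga_b$. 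The origin is always among them: as $\tau\to0^+$ the segments $[\widetilde\la_1(\tau),\widetilde\la_2(\tau)]$ shrink to $0$, which (for $b\notin\bR$) lies at the node of $\Ga_b$ on the boundary of the region swept by these segments, and there $\C_b$ acquires a logarithmic singularity; so $0$ is a genuine branch point, and an endpoint of $\mathrm{supp}\,\mu_b$, for every non-real $b$. (For real $b$ the support is the interval $[f_1,f_2]$ through $0$, with $0$ interior; the conjecture is to be read for $b\notin\bR$.)

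What remains --- and what I expect to be the main obstacle --- is the ``two versus three'' dichotomy: for a given non-real $b$, is $\mathrm{supp}\,\mu_b$ an arc from $0$ to a single $f_j$, or does it reach both $f_1$ and $f_2$? Here $\mathrm{supp}\,\mu_b$ is a critical subgraph of the quadratic differential $\C_b(z)^2\,dz^2$, whose finite zeros are $f_1$ and $f_2$ and which has a distinguished (logarithmic) singularity at the origin; whether, say, $f_2$ lies on the critical trajectory joining $f_1$ to the origin or spawns a separate branch is governed by a Boutroux-type real-period condition $\mathrm{Re}\oint\C_b(z)\,dz=0$ over an appropriate cycle. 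One would have to analyse this condition over the whole $b$-plane, find the real-codimension-one locus where the critical graph changes topology, and confirm that on one side all three foci are endpoints while on the other exactly two are (always including $0$), matching Fig.~\ref{fig2}. Thus the obstacles are twofold: the rigor gap already acknowledged in \S~2 (existence of $\mu_b$, validity of \eqref{eq:Cauchy} as a global object, the hard-edge structure of the support), and the delicate global trajectory analysis of $\C_b(z)^2\,dz^2$; only the algebraic identity \eqref{eq:discfact}, which is nonetheless the heart of the statement, is unconditional.
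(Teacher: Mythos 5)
Keep in mind that the statement you are addressing is a \emph{conjecture}: the paper offers no proof of it, only the heuristic of \S 2 (Lemma~\ref{QDiff} applied to the limit measure of scaled eigenpolynomial roots, the quadratic differential \eqref{quadr}, Lemma~\ref{ends}, and Conjecture~\ref{relation}) together with numerics. Judged against that, the unconditional part of your argument is correct and is, in substance, the same algebraic fact as the paper's Lemma~\ref{ends}, reached by a different route: you compute $\mathrm{disc}_\tau\bigl((z-4\tau b)^2-64\tau^2(1-\tau)\bigr)=-4096\,z^2\bigl(27z^2+4b(b^2-36)z-16(b^2-4)^2\bigr)$ directly from the Kuijlaars--Van Assche representation \eqref{eq:Cauchy} in the spectral plane, while the paper obtains the identical quadratic factor as $\mathrm{disc}_\Theta\bigl(\Theta(\Theta+b)^2-4\Theta-\La\bigr)$ in the scaled-root plane; both reproduce the nontrivial factor of \eqref{foci}. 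Your route has the merit of working with $\C_b$ itself rather than with the (conjectural) limit of eigenpolynomial root measures, and of explaining why the origin is distinguished (collision of a root of the cubic with the contour endpoint $\tau=0$, producing the $z^2$ factor and a logarithmic singularity), which in the paper appears only as the pole of \eqref{quadr}.

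The genuine gap is the one you yourself flag, and it is precisely the content of the conjecture: your argument at best shows that the endpoints of $\mathrm{supp}\,\mu_b$ lie \emph{among} $\{0,f_1,f_2\}$, not the dichotomy ``all three, or exactly two always including $0$'' and its dependence on $b$. The steps you would need to close it are nontrivial and are exactly where the paper also stops: (a) the passage from \eqref{eq:Cauchy}, which is asserted only in the complement of the region swept by the segments $[\widetilde\la_1(\tau),\widetilde\la_2(\tau)]$, to statements about endpoints of $\mathrm{supp}\,\mu_b$ requires knowing that the relevant analytic continuation is the actual Cauchy transform near those endpoints; (b) the claim that $\mathrm{supp}\,\mu_b$ is a critical graph of $\C_b(z)^2\,dz^2$ is not available here in the simple quadratic form of Lemma~\ref{QDiff}, since $\C_b$ is given by an elliptic (Nilsson-class) integral rather than by a quadratic equation --- the paper instead phrases the trajectory condition in the $\Theta$-plane via \eqref{quadr} and Conjecture~\ref{relation}; (c) the dismissal of $z=4b$ and the proposed Boutroux-type real-period analysis over the whole $b$-plane are plausible but unexecuted. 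So your proposal is a reasonable, correctly computed reduction that parallels the paper's heuristic, but it does not prove the statement --- which, to be fair, the paper does not either.
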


In order to characterize the support of $\mu_b$ completely, we suggest the following heuristic argument. Let $\widetilde \la_{j_m,m}$ be an eigenvalue of \eqref{eq:mod}. Abusing our notation,  denote by $p_m(t)$ the eigenpolynomial whose eigenvalue equals $\widetilde \la_{j_m,m}$. It satisfies the differential equation
\begin{equation}\label{eigenp}
-4tp_m^{\prime\prime}+(4t^2+4bm^{1/2}t-2)p^\prime_m-(4mt-bm^{1/2}+\widetilde \la_{j_m,m}m^{3/2})p_m=0.
\end{equation}

Let us  choose a (sub)sequence $\{\widetilde \la_{j_m,m}\}_{m=1}^\infty$ of the eigenvalues of $\widetilde M_m(b)$ (one $j_m$ for each $m$) and assume that it converges to some complex number $\La.$ In other words, $\La=\lim_{m\to \infty}\widetilde \la_{j_m,m}$. Let $\{p_m(t)\}_{m=1}^\infty$ be the sequence of corresponding eigenpolynomials, and $\{\kappa_m\}_{m=1}^\infty$ be the sequence of their root-counting measures.  One can easily observe that the sequence $\{\kappa _m\}_{m=1}^\infty$  does not converge without appropriate scaling. Scaling the variable $t$ in the $m$-th eigenpolynomial as $t=\Theta m^{1/2},$ $\Theta$ being the new scaled time variable, we transform equation~\eqref{eigenp} into
$$
 \frac{-4\Theta}{m^{1/2}}\frac{d^2p_m}{d\Theta^2}+\frac{(4m\Theta^2+4mb\Theta-2)}{m^{1/2}}\frac{dp_m}{d\Theta}-(4m^{3/2}\Theta-bm^{1/2}+\widetilde \la_{j_m,m}m^{3/2})p_m=0.
$$
Dividing the above equation by $m^{3/2}p_m,$ we get 
\begin{equation}\label{eigenpmod}
-4\Theta\frac{ \frac{d^2p_m}{d\Theta^2}}{m^2p_m}+\left(4\Theta^2+4b\Theta-\frac{2}{m}\right)\frac{\frac{dp_m}{d\Theta}}{mp_m}-\left(4\Theta-\frac{b}{m}+\widetilde \la_{j_m,m}\right)=0. 
\end{equation}

Denote by   $\{\widetilde \kappa_m\}_{m=1}^\infty$ the sequence of the root-counting measures of the scaled eigenpolynomials $p_m(m^{1/2}\Theta)$. Assuming that the weak limit $\lim_{m\to\infty}\widetilde \kappa_m$ exists, denote it by $\Omega$. Then,  
$$\lim_{m\to \infty} \frac{\frac{dp_m}{d\Theta}}{mp_m}= \C_\Omega\quad \text{and}\quad   \lim_{m\to \infty} \frac{\frac{d^2p_m}{d^2\Theta}}{m^2p_m}= \C_\Omega^2,$$ 
where the limits are understood as distributions. Thus under the convergence assumption, the Cauchy transform $\C_\Omega$ of the limiting measure $\Omega$ satisfies a.e. in $\bC$ the algebraic equation
\begin{equation}\label{eq:final}
\Theta\C^2_\Omega-\Theta(\Theta+b)\C_\Omega+(\Theta+\Lambda/4)=0,
\end{equation}
where $\Lambda=\lim_{m\to\infty}\widetilde\la_{j_m,m}$. The fact that \eqref{eq:final} admits a solution which is the Cauchy transform of a probability measure  supported on a finite number of compact semi-analytic curves and points imposes strong restriction on the possible values of $\Lambda$. 

We need the following statement, see \cite {BoSh}. 

\begin{lemma}\label{QDiff}
If the Cauchy transform $\C_\nu$ of a probability measure $\nu$ satisfies a.e. in $\bC$ a quadratic equation
$$Q_2(z)\C_\nu^2+Q_1(z)\C_\nu+Q_0(z)=0,$$  
then the support of $\nu$ consists of finitely many semi-analytic curves which are horizontal trajectories of the quadratic differential $\Psi=- \frac{Q_1^2-4Q_2Q_0}{Q_2^2}dz^2$.  In particular, the finite endpoints of the support are either the roots of $Q_1^2-4Q_2Q_0$ or the roots of  $Q_2$. 
\end{lemma}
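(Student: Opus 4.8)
The plan is to combine the two standard facts about Cauchy transforms — that $\partial_{\bar z}\C_\nu=\pi\nu$ in the distributional sense, so $\C_\nu$ is holomorphic off $S:=\mathrm{supp}\,\nu$, and that $\nu$ is recovered from the jump of $\C_\nu$ across $S$ via the Sokhotski--Plemelj formula — with the algebraic constraint in the hypothesis. Write $D(z):=Q_1(z)^2-4Q_2(z)Q_0(z)$ and $\Psi:=-\frac{D(z)}{Q_2(z)^2}\,dz^2$; away from $S$ the hypothesis forces $\C_\nu$ to coincide with one branch of the two-valued algebraic function $w_\pm(z)=\frac{-Q_1(z)\pm\sqrt{D(z)}}{2Q_2(z)}$. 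First I would show $\nu$ carries no two-dimensional mass: on an open set $U$ where $\nu$ has a density $f$, the transform $\C_\nu$ is continuous with $\partial_{\bar z}\C_\nu=\pi f$, and applying $\partial_{\bar z}$ to the identity $Q_2\C_\nu^2+Q_1\C_\nu+Q_0=0$ (holomorphic coefficients) gives $(2Q_2\C_\nu+Q_1)\,\partial_{\bar z}\C_\nu=0$; hence either $\partial_{\bar z}\C_\nu\equiv 0$ on $U$ or $\C_\nu\equiv -Q_1/(2Q_2)$ on $U$, and since the latter is meromorphic this again forces $\partial_{\bar z}\C_\nu\equiv 0$, i.e. $f\equiv 0$. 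Consequently $S$ has zero area, $\C_\nu$ is a genuine holomorphic branch of $w_\pm$ on $\bC\setminus S$, and it possesses one-sided boundary values along each smooth arc of $S$.

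Next I would identify the smooth arcs of $S$. Let $\gamma\subset S$ be such an arc with arclength parameterisation $z=z(s)$ and unit tangent $\dot z(s)$, and let $\C_\pm$ be the boundary values of $\C_\nu$ from the two sides. Both $\C_+$ and $\C_-$ solve the quadratic equation (the coefficients are continuous), and they cannot agree on a subarc of positive length, since then $\C_\nu$ would extend holomorphically there and that subarc would not lie in $S$; hence $\{\C_+,\C_-\}=\{w_+,w_-\}$ and $\C_+-\C_-=\pm\sqrt{D}/Q_2$ on the relative interior of $\gamma$. On the other hand, the Sokhotski--Plemelj formula shows that $(\C_+-\C_-)\,\dot z(s)=-2\pi i\,\rho(s)$, where $\rho(s)\ge 0$ is the density of $\nu$ along $\gamma$ with respect to arclength and the two sides are labelled accordingly; in particular $(\C_+-\C_-)\,\dot z(s)$ is purely imaginary. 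Equating the two expressions for $\C_+-\C_-$, multiplying by $\dot z(s)$ and squaring yields $\frac{D(z(s))}{Q_2(z(s))^2}\,\dot z(s)^2=-4\pi^2\rho(s)^2\le 0$, that is, $\Psi$ evaluated on the tangent direction of $\gamma$ equals $4\pi^2\rho(s)^2\ge 0$, and is strictly positive wherever $\rho>0$. This is precisely the statement that $\gamma$ is a horizontal trajectory of $\Psi=-\frac{Q_1^2-4Q_2Q_0}{Q_2^2}\,dz^2$. Since $\C_\nu$ is algebraic and $\C_\nu(\infty)=0$, the set $S$ is a finite union of such trajectory arcs (possibly together with finitely many isolated points, each a root of $Q_2$, as a pole-order count in the quadratic equation shows); this last regularity assertion is the point I expect to require the most care, and it is the one carried out in detail in \cite{BoSh}.

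Finally, for the endpoints, let $z_*$ be a finite boundary point of an arc of $S$, so that near $z_*$ the function $\C_\nu$ is single-valued and holomorphic on a slit disc whose slit terminates at $z_*$. If $D(z_*)\ne 0$ and $Q_2(z_*)\ne 0$, then both branches $w_+$ and $w_-$ are holomorphic and single-valued on a full neighbourhood of $z_*$, so $\C_\nu$ would extend holomorphically across $z_*$ — contradicting that the arc ends there. Hence $z_*$ is a root of $D=Q_1^2-4Q_2Q_0$ or a root of $Q_2$, which is the ``in particular'' clause. The main obstacle, as noted, is not any of these local computations but the global structural statement that $S$ is genuinely a finite union of smooth arcs with well-defined endpoints — rather than, say, a nowhere-dense set supporting recurrent trajectories; this is forced by the algebraicity of $\C_\nu$, which confines $S$ to the finite critical graph of $\Psi$, together with the normalisation of $\C_\nu$ at infinity.
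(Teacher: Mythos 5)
First, a point of comparison: the paper itself does not prove Lemma~\ref{QDiff} at all --- it is quoted from \cite{BoSh} --- so your proposal is being measured against the proof in that reference, whose overall strategy (ruling out two-dimensional mass, identifying the one-dimensional part via the Plemelj jump, reading off horizontality from the sign of the jump against the tangent, and a local analysis at endpoints) is indeed the one you sketch. The Plemelj computation and the endpoint argument you give are correct and are essentially the same as in the cited source.

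However, the two steps you yourself flag as delicate are genuinely gapped, not merely deferred. (a) Your zero-area argument assumes that on an open set the measure has a density $f$ and that one may apply $\partial/\partial\bar z$ termwise to $Q_2\C_\nu^2+Q_1\C_\nu+Q_0=0$; a general probability measure need not have a density on any open set (it may have a singular part of positive-area support), $\C_\nu$ is a priori only in $L^p_{loc}$, and $\partial/\partial\bar z$ of the product $Q_2\C_\nu^2$ is not automatically $(2Q_2\C_\nu+Q_1)\partial\C_\nu/\partial\bar z$ for such low regularity --- handling exactly this is a substantial part of \cite{BoSh}. (b) The Plemelj step presupposes that the support is already a locally finite union of smooth arcs with one-sided boundary values, which is the structural statement to be proved; your closing justification, that algebraicity ``confines $S$ to the finite critical graph of $\Psi$,'' is false as stated. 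For example, the uniform measure on the unit circle has $\C_\nu=1/z$ outside and $0$ inside, so it satisfies $z\C_\nu^2-\C_\nu=0$; its support is a closed horizontal trajectory of $-dz^2/z^2$ passing through no zero of the discriminant, hence not contained in any critical graph. The conclusion of the lemma still holds there (a closed semi-analytic curve, no finite endpoints), but it shows that finiteness and semi-analyticity of the support cannot be obtained from the critical-graph heuristic and must be established by the finer arguments of \cite{BoSh}. So the proposal reproduces the right skeleton but leaves the two load-bearing steps unproved.
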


In our case the corresponding quadratic differential is 
\begin{equation}\label{quadr}
\Psi_{b,\Lambda}=- \frac{\Theta(\Theta+b)^2-4\Theta-\Lambda}{\Theta} d\Theta^2.
\end{equation}

\begin{lemma}\label{ends}

The set of critical $\La$ of the polynomial $P(\Theta)=\Theta(\Theta+b)^2-4\Theta-\La$, (i.e., the set of $\La$ for which $P(\Theta)$ has a double root w.r.t $\Theta$)  coincides with the foci \eqref{foci}. 
\end{lemma}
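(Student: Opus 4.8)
The plan is to reduce the statement to a short discriminant computation. Expanding, one has $P(\Theta)=\Theta^3+2b\Theta^2+(b^2-4)\Theta-\La$, so that $\La$ enters only through the constant coefficient; consequently the discriminant $\operatorname{disc}_{\Theta}P$ of $P$ with respect to $\Theta$ is a quadratic polynomial in $\La$. Since $\La$ is critical for $P$ exactly when $P(\,\cdot\,;\La)$ has a double root, i.e. exactly when $\operatorname{disc}_{\Theta}P=0$, it suffices to compute this discriminant explicitly and match it with the nontrivial factor of \eqref{foci}. I do not anticipate a genuine obstacle: the whole content is this routine polynomial identity.

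Concretely, I would apply the classical formula $\operatorname{disc}(\Theta^3+p\Theta^2+q\Theta+r)=18pqr-4p^3r+p^2q^2-4q^3-27r^2$ with $p=2b$, $q=b^2-4$, $r=-\La$. Collecting powers of $\La$, the coefficient of $\La^2$ is $-27$; the coefficient of $\La$ is $4p^3-18pq=32b^3-36b(b^2-4)=-4(b^3-36b)$; and the constant term is $p^2q^2-4q^3=4q^2\bigl(b^2-(b^2-4)\bigr)=16(4-b^2)^2$. Hence
$$\operatorname{disc}_{\Theta}P=-\bigl(27\La^2+4(b^3-36b)\La-16(4-b^2)^2\bigr),$$
so the critical set of $P$ is precisely the zero set of $27\La^2+4(b^3-36b)\La-16(4-b^2)^2$, which is exactly the nontrivial factor occurring in \eqref{foci} (with $f$ renamed $\La$). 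Its two roots are $\La=f_{1,2}=\tfrac{2}{27}\bigl(36b-b^3\pm\sqrt{(12+b^2)^3}\bigr)$, while the remaining root of \eqref{foci} is the double root $f=0$, which corresponds to the node of $\Ga_b$ and is a critical value of $P$ only in the exceptional cases $b=\pm 2$ (where one of $f_1,f_2$ collides with the origin). This is the sense in which the critical set ``coincides with the foci'' \eqref{foci}, and it is the one point requiring a word of care in the write-up.

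As a cross-check, and to recover the explicit values $f_{1,2}$ directly, I would alternatively note that $\La=P(\Theta;0)=\Theta^3+2b\Theta^2+(b^2-4)\Theta$ exhibits $\La$ as the value of a cubic map in $\Theta$, whose critical values are obtained by substituting the two roots $\Theta_\pm=\tfrac{-2b\pm\sqrt{b^2+12}}{3}$ of $3\Theta^2+4b\Theta+(b^2-4)=0$. Using the relation $3\Theta_\pm^2=-4b\Theta_\pm-b^2+4$ twice to rewrite $P(\Theta_\pm;0)$, one simplifies it to $\tfrac{2}{27}\bigl(36b-b^3\mp(12+b^2)^{3/2}\bigr)$, which is again $\{f_1,f_2\}$ and confirms the discriminant computation. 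Everything else in the argument is mechanical algebra, so no substantive difficulty is expected.
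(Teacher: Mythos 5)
Your computation is correct and is precisely the ``straight-forward calculation'' the paper's one-line proof alludes to: the discriminant of $\Theta^3+2b\Theta^2+(b^2-4)\Theta-\La$ in $\Theta$ is, up to sign, the nontrivial quadratic factor $27\La^2+4(b^3-36b)\La-16(4-b^2)^2$ of \eqref{foci}, with roots $f_{1,2}$. Your added remark that the focus $f=0$ (the node) is a critical value of $P$ only for $b=\pm 2$ is a fair and correct clarification of the lemma's slightly loose phrasing, and the cross-check via the critical values of the cubic map also checks out.
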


\begin{proof} Straight-forward calculation.
\end{proof}

 For generic $\Lambda$, the differential $\Psi_{b,\Lambda}$ has three simple zeros and one simple pole at $0$. Since $\Omega$ has bounded support and the Cauchy transform $\C_\Omega$ is univalent in the complement to the support of $\Omega$ (which consists of a finite number of compact curves and points), then $\Psi_{b,\Lambda}$ must  have two  critical trajectories one of which  connects two (simple) zeros and the other connects  the pole at the origin and the remaining  zero. This reasoning motivates the following claim. 

\begin{conjecture}\label{relation} The support of $\mu_b$ coincides with the set of values of $\Lambda$ such that the differential \eqref{quadr} has two critical horizontal trajectories. 
\end{conjecture}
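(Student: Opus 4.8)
\smallskip
\noindent\emph{Towards a proof.} The plan is to prove the two inclusions "$\operatorname{supp}\mu_b\subseteq\{\Lambda:\Psi_{b,\Lambda}\ \text{has two critical trajectories}\}$" and "$\supseteq$" separately, in both cases passing from the eigenvalues of $\widetilde M_m(b)$ to the zeros of the associated eigenpolynomials, where $\Psi_{b,\Lambda}$ naturally lives. The statement should be read up to the finitely many degenerate values of $\Lambda$ --- the foci $0,f_1,f_2$ of $\Ga_b$ of Lemma~\ref{ends}, where two zeros of $P(\Theta)$ collide or a zero absorbs the pole at $0$ --- at which only one critical trajectory survives; away from them $\Psi_{b,\Lambda}$ has three simple zeros and a simple pole at $\Theta=0$, and since exactly one trajectory issues from a simple pole, "two critical trajectories" automatically means one joining $0$ to a zero and one joining the two remaining zeros, as in the heuristic above.

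\smallskip
\noindent\emph{The inclusion $\subseteq$.} Fix $\Lambda\in\operatorname{supp}\mu_b$ and choose eigenvalues $\widetilde\la_{j_m,m}\to\Lambda$. The crucial first step is an a priori bound: the zeros of the rescaled eigenpolynomial $p_m(m^{1/2}\Theta)$ remain in a fixed disc. This should follow from the electrostatic (Bethe-ansatz) relations obtained by substituting $p_m'/p_m=\sum_k(\Theta-\Theta_k)^{-1}$ into \eqref{eigenpmod} and reading off the residues at the $\Theta_k$, together with the explicit $O(m)$ and $O(m^{3/2})$ growth of the entries of $M_m(b)$. Tightness then lets one extract a subsequence along which $\widetilde\kappa_m\to\Omega$ for a compactly supported probability measure $\Omega$; arguing as in \cite{BoSh,KvA} one obtains $\frac{p_m'}{mp_m}\to\C_\Omega$ and $\frac{p_m''}{m^2p_m}\to\C_\Omega^2$ locally in $L^1$, so that \eqref{eigenpmod} passes to the limit and $\C_\Omega$ satisfies \eqref{eq:final}; the only delicate point is that the distributional limit of $(p_m'/mp_m)^2$ really equals $\C_\Omega^2$ and not $\C_\Omega^2$ plus a defect, which the algebraic relation \eqref{eq:final} should let one exclude near $\operatorname{supp}\Omega$. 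By Lemma~\ref{QDiff} the support of $\Omega$ is a union of critical trajectories of $\Psi_{b,\Lambda}$. It remains to identify the configuration. The Cauchy transform $\C_\Omega$ is single-valued off $\operatorname{supp}\Omega$ and behaves as $z^{-1}$ at $\infty$ (it is the branch of \eqref{eq:final} with that asymptotics), and for $\Lambda\neq0$ the point $0$ must lie in $\operatorname{supp}\Omega$, since $\C_\Omega\sim\bigl(-\Lambda/(4\Theta)\bigr)^{1/2}$ near $0$ (otherwise setting $\Theta=0$ in \eqref{eq:final} would give $\Lambda/4=0$). All four branch points $0,\Theta_1,\Theta_2,\Theta_3$ of \eqref{eq:final} must therefore be covered by $\operatorname{supp}\Omega$; since $0$ carries exactly one trajectory, and since a configuration with three saddle connections is of real codimension two and so occurs at most at isolated points of the one-dimensional support, the pairing must be $\{0,\Theta_j\}\cup\{\Theta_k,\Theta_l\}$ --- i.e. $\Psi_{b,\Lambda}$ has two critical trajectories of the asserted shape. (That this is a single real condition on $\Lambda$ rather than two follows from the fact that the two cycles encircling the two saddle connections have periods of $\sqrt{\Psi_{b,\Lambda}}$ summing to the residue at $\infty$, which one computes to be real and independent of $\Lambda$, so that the vanishing of the imaginary part of either period is equivalent to that of the other.)

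\smallskip
\noindent\emph{The inclusion $\supseteq$.} This is the harder half. Given $\Lambda_0$ for which $\Psi_{b,\Lambda_0}$ is in the two-trajectory configuration, one must exhibit eigenvalues of $\widetilde M_m(b)$ accumulating on $\Lambda_0$ with positive asymptotic density. The natural route is exact WKB for \eqref{eigenpmod} with semiclassical parameter $\hbar=1/m$: construct a quasimode --- an approximate degree-$m$ polynomial solution whose zeros cluster along the critical graph of $\Psi_{b,\Lambda_0}$ --- and upgrade it to a genuine eigenpair by a fixed-point argument, the two-trajectory hypothesis being precisely what makes the relevant Bohr--Sommerfeld/Voros quantization condition solvable for $\sim m$ consecutive quantum numbers, with spacing $O(1/m)$ along the locus; the resulting density must then integrate to $1$ along the locus, forcing $\operatorname{supp}\mu_b$ to be all of it. An alternative, more global, route is to use the first inclusion together with the conjectured description of the endpoints of $\operatorname{supp}\mu_b$ as the foci of $\Ga_b$ to reduce the statement to the absence of spectral gaps inside the locus, and then to track $\operatorname{supp}\mu_b$ continuously as $b$ is deformed from $0$ --- where $\operatorname{supp}\mu_0$ is the interval of \cite{ShTa1} and visibly coincides with the locus --- to the given value, ruling out the creation of a gap except possibly at the level-crossing values of $b$.

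\smallskip
\noindent\emph{Main obstacle.} The genuinely hard points are (i) the a priori zero bound and the identification of the distributional limit of $(p_m'/mp_m)^2$ with $\C_\Omega^2$, i.e. exactly the convergence statements flagged as missing in \S1, and, above all, (ii) a rigorous exact-WKB analysis of the non-self-adjoint, complex-coefficient equation \eqref{eigenpmod} with its regular singular point at $\Theta=0$: controlling the Stokes structure and the connection problem across the critical graph for such an operator is where a complete proof of the reverse inclusion would have to do its real work. The global-deformation alternative merely trades this for an understanding of how the critical graph of $\Psi_{b,\Lambda}$ bifurcates at the level-crossing points, which is itself the (conjectural) subject of \S3.
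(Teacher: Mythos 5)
What you are trying to prove is stated in the paper as a conjecture, not a theorem: the paper offers no proof at all, only the heuristic paragraph preceding the statement (the derivation of \eqref{eq:final} under explicit convergence assumptions, Lemma~\ref{QDiff}, and the qualitative remark that a bounded-support $\Omega$ with single-valued $\C_\Omega$ forces two critical trajectories of \eqref{quadr}), and the authors themselves flag the missing convergence statements as the reason the section is not rigorous. Your proposal does not close this gap: it is a roadmap whose load-bearing steps are exactly the unproven ones. In the inclusion ``$\subseteq$'' you assume (i) an a priori bound keeping the zeros of $p_m(m^{1/2}\Theta)$ in a fixed disc, (ii) that along a subsequence $\widetilde\kappa_m\to\Omega$ one has $p_m'/(mp_m)\to\C_\Omega$ and, crucially, $p_m''/(m^2p_m)\to\C_\Omega^2$ without a defect term, and (iii) that the only admissible trajectory configuration is $\{0,\Theta_j\}\cup\{\Theta_k,\Theta_l\}$. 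None of these is established: (ii) is precisely the type of statement (weak convergence of root measures does not in general give convergence of squares of logarithmic derivatives off an exceptional set, let alone everywhere) that the paper says is missing, and your parenthetical hope that ``the algebraic relation \eqref{eq:final} should let one exclude'' the defect is circular, since \eqref{eq:final} is only available after the defect is excluded. The codimension argument in (iii) and the ``single real condition'' claim about the periods are plausible but asserted, not proved.

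The inclusion ``$\supseteq$'' is in even worse shape: the quasimode/exact-WKB construction for the non-self-adjoint equation \eqref{eigenpmod}, with its Stokes geometry and the solvability of a Bohr--Sommerfeld condition for $\sim m$ consecutive quantum numbers, is named but not carried out, and you acknowledge this is ``where a complete proof would have to do its real work.'' Likewise, even granting both inclusions at the level of supports of subsequential limits $\Omega$, you would still need to relate $\operatorname{supp}\mu_b$ (the limit of the spectral measures of $\widetilde M_m(b)$, i.e.\ of the root measures of $\widetilde D_m$) to the set of attainable limits $\La$ of individual eigenvalue sequences; this identification is itself a nontrivial step that your sketch takes for granted. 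In short: your $\subseteq$ direction is a more detailed rewrite of the paper's own heuristic (via Lemma~\ref{QDiff} and \eqref{eq:final}), and your $\supseteq$ direction is a program, so the statement remains a conjecture after your argument exactly as it does in the paper.
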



\section{Numerical results on the level crossing points and spectral monodromy of QES-sextic}

\subsection{Level crossing points}  For any fixed $b$, the spectrum of $M_m(b)$ is the zero locus of the bivariate polynomial $D_m(\la,b)$ with respect for the variable $\la$.  For a generic value of $b$, the spectrum of $M_m(b)$ consists of $m+1$ distinct points. By definition, the level crossing  $\Si_m\subset \bC$ of $M_m(b)$ is the set of all values of $b$, for which the spectrum of $M_m(b)$ contains less than $m+1$ distinct points. $\Si_m$ is the zero locus of the discriminant $\D_m(b)$ of $D_m(\la,b)$ determined as 
$$\D_m(b):=\text{Resultant} \left( D_m(\la,b), \frac{\partial D_m ( \la,b) } {\partial \la}, \la \right).$$
One can easily check that $\deg \D_m(b)=m(m+1)$. Observe that $\D_m(b)$ is  a real univariate polynomial without real roots. Thus the set $\Si_m$ consists of $\binom {m+1}{2}$ (not necessarily distinct) complex conjugate pairs of points. 
The level crossing sets $\Si_{10}(b)$ and  $\Si_{20}(b)$ are shown in Fig.~\ref{figromb}.  (Similar picture can be found in \cite{ErGaIr}.) Our experiments in Mathematica for $m\le 25$ show that:  

\medskip
\noindent (i) $\Si_m$   forms a lattice-like pattern whose outer boundary is a curvilinear rombus. The level crossing points in the upper and lower half planes are naturally organized in ``horizontal" rows with $m, m-1, \dots, 1$ points respectively, see Fig.~\ref{figromb}.

\smallskip
\noindent (ii) the sizes of the rhombi grow as $\sqrt{12m}$, see Fig.~\ref{scale};

\smallskip
\noindent (iii) after scaling by $\sqrt{m}$, the sequence of root counting measures of $\Si_m$ converges to a continuous measure $\mu_b$ supported on the curvilinear rhombus $\mathfrak R\subset \bC$ in the complex $b$-plane such that if $b\in \mathfrak R,$ then the support of $\mu_b$ consists of three legs ending at $0$ and both foci $f_1, f_2$, see Fig.\ref{fig2} left. If $b\in \bC\setminus \mathfrak R,$ then the support of $\mu_b$ consists of one leg ending at $0$ and one of the  foci $f_1, f_2$, see Fig.\ref{fig2} right. The boundary of $\mathfrak R$ consists of those $b$ for which one focus lies on the leg connecting $0$ with the other focus, see Fig.~\ref{fig11}. Conjecturally, the four vertices of $\mathfrak R$ are $\pm 2$ and $\pm \sqrt{12}I$.  At $b=\pm 2$ one of the foci coincides with $0$ and at $b=\pm \sqrt{12}I$ the foci coincide with each other. 

\smallskip
\noindent (iii) after appropriate rescaling, the distribution of level crossings near the center of each of curvilinear triangles converge to a regular hexagonal lattice, see Fig.~\ref{fig3}. (Close to the center of the curvilinear triangle in Fig.~\ref{fig3} the picture resembles a hexagonal lattice.)

\begin{figure}

\begin{center}
\includegraphics[scale=0.4]{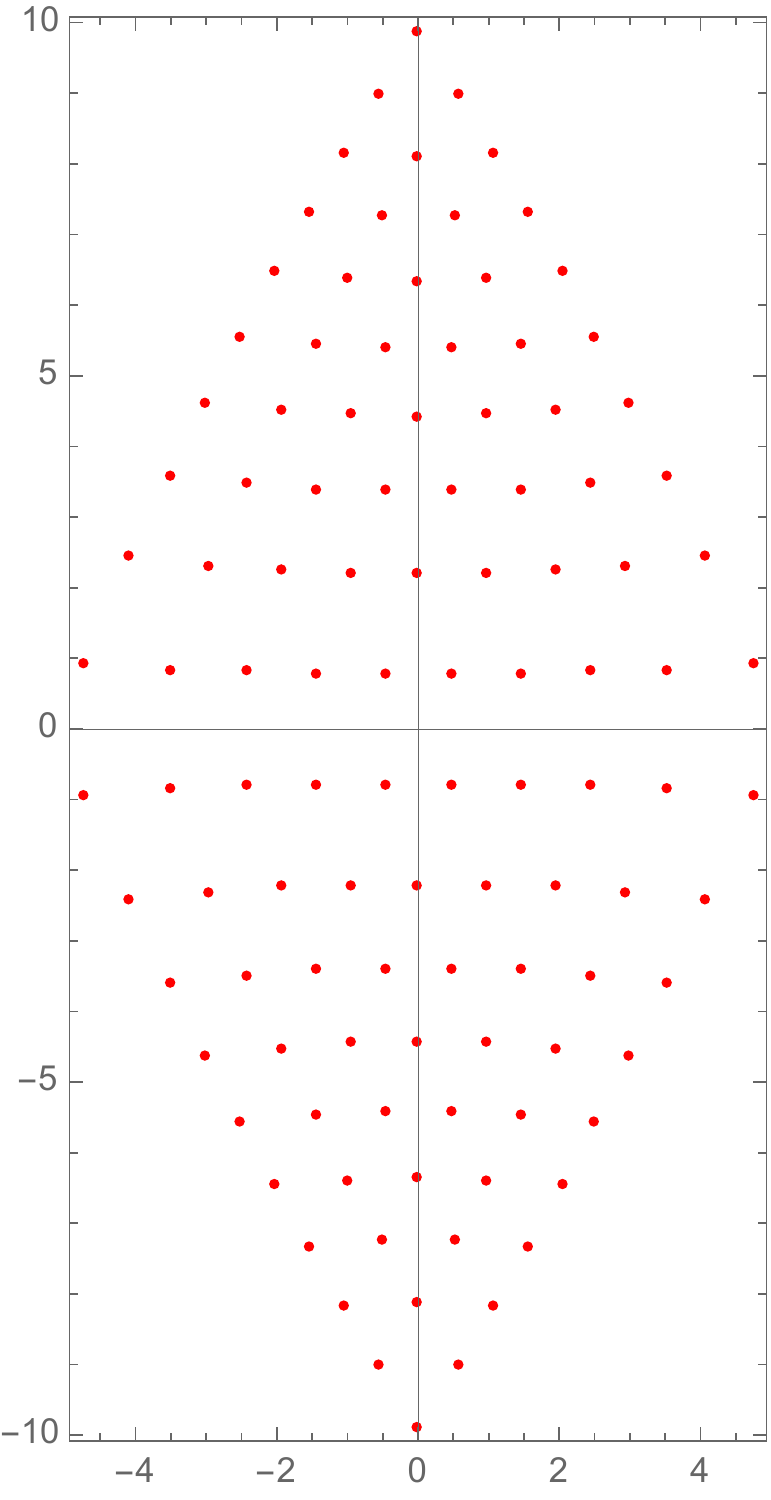}   \hskip 2cm \includegraphics[scale=0.4]{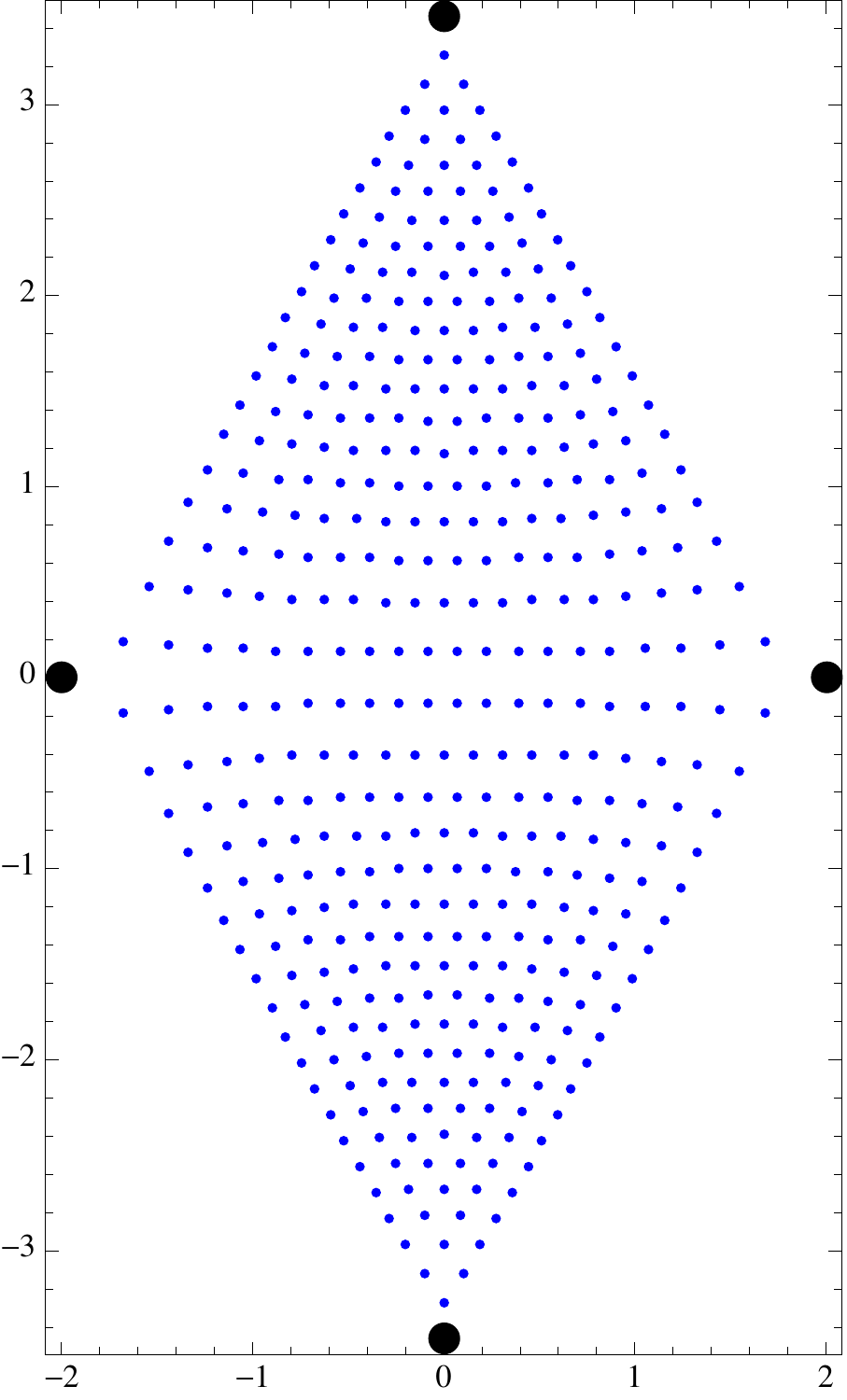} 
\end{center}


\caption{Rombi of the level crossing points of the QES-sextic for $m=10$ and $m=20$ with scaling.}
\label{figromb}
\end{figure}

\begin{figure}

\begin{center}
\includegraphics[scale=0.35]{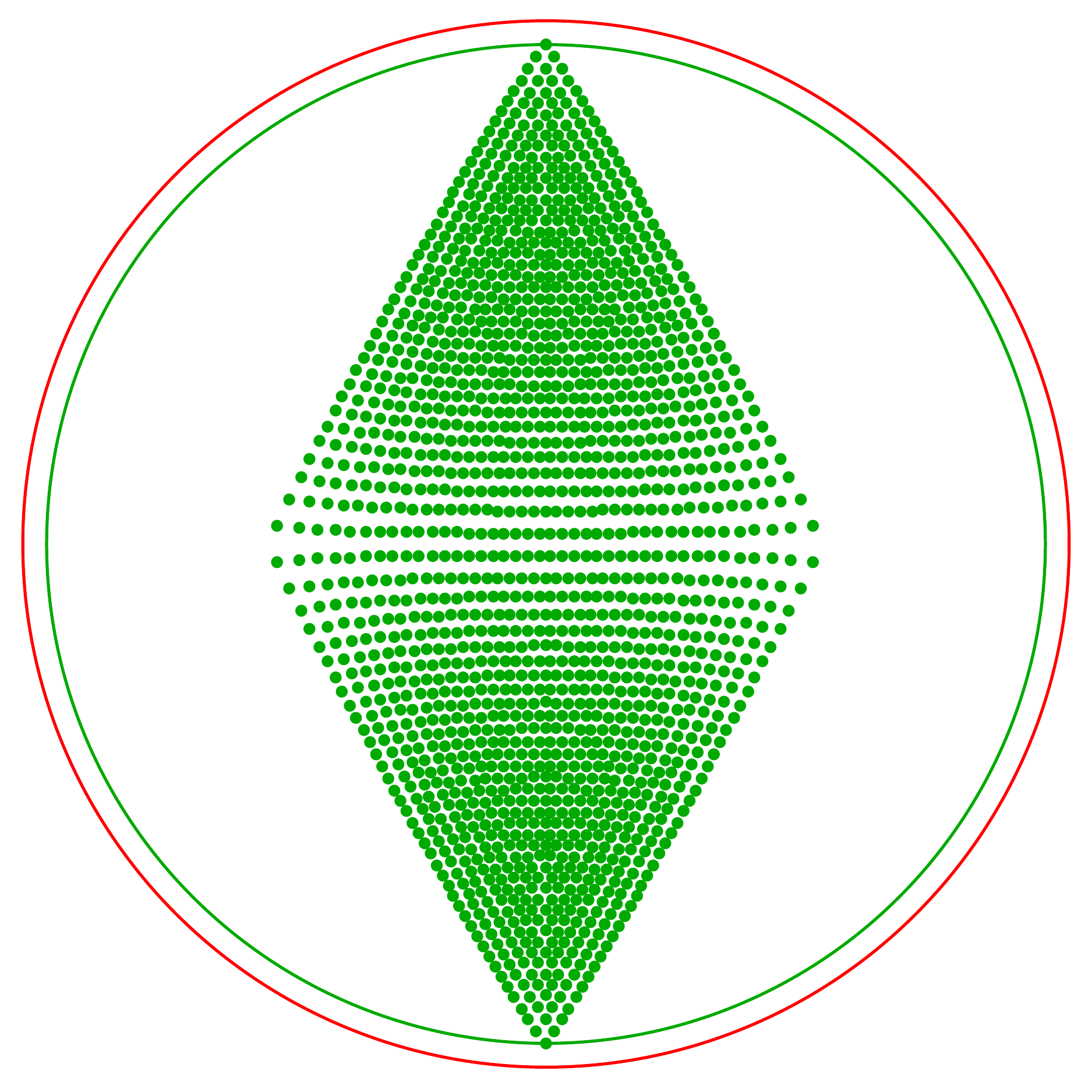}  
\end{center}


\caption{Rombi of the level crossing points of the QES-sextic for $m=41$. The red circle has radius $\sqrt{12\cdot 41}$.}
\label{scale}
\end{figure}

\begin{figure}

\begin{center}
\includegraphics[scale=0.3]{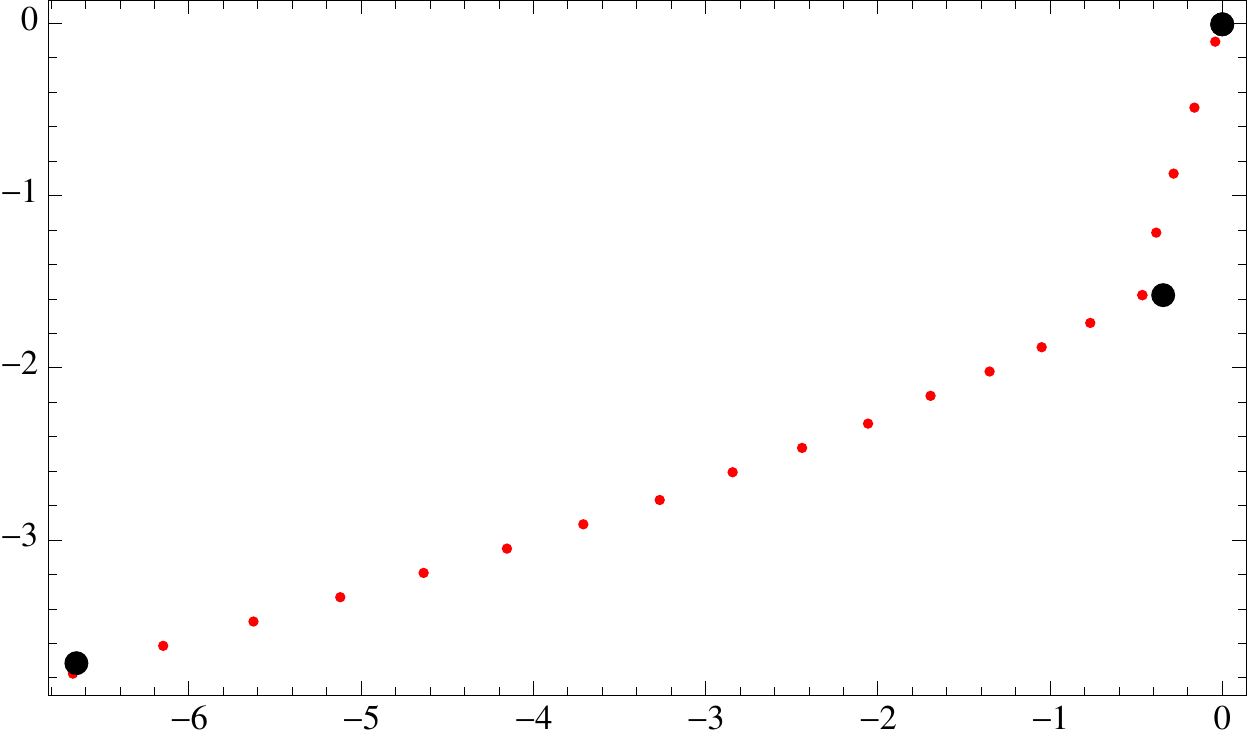} 
\end{center}


\caption{The support of $\mu_b$ for $b$ on boundary of $\mathfrak R$.}
\label{fig11}
\end{figure}

\begin{figure}

\begin{center}
\includegraphics[scale=0.4]{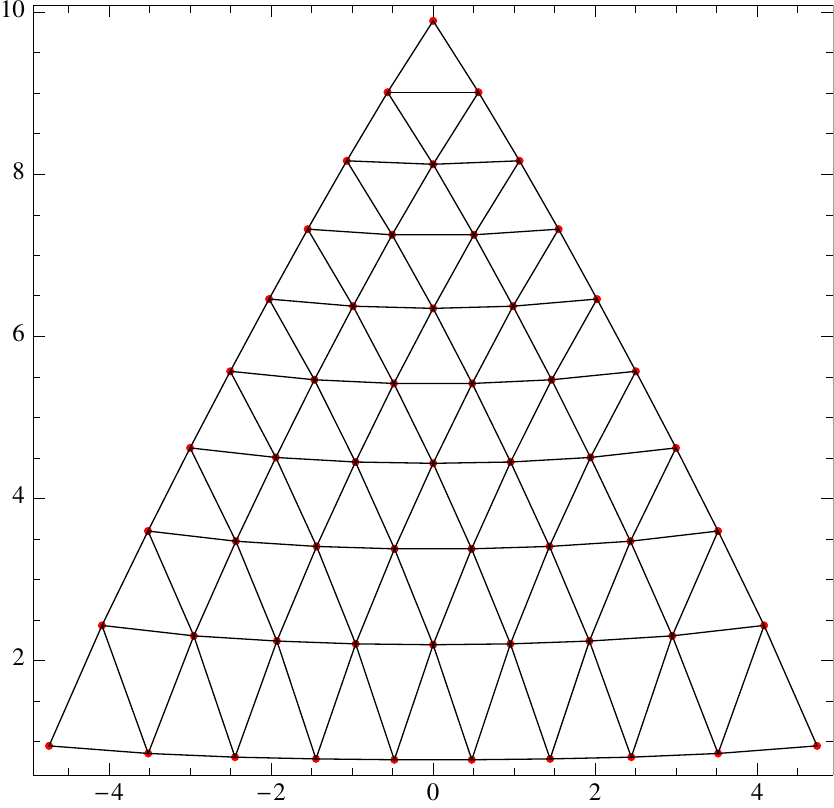}
\end{center}


\caption{Level crossings in the upper half plane for $m=10$ with  nearest neighbours  connected.}  
\label{fig3}
\end{figure}
   
\subsection{Monodromy} Below,  based on our numerical results obtained for $m\le 10$, we present,  for any  positive integer $m$,  an explicit conjecture completely describing the monodromy of the spectrum of $M_m(b)$.  Fig.~\ref{fig4}  shows these numerical results for $m=5$. 

To determine the monodromy operators of the spectrum of $M_m(b)$, one  needs  to choose a system of  (based) loops in $\bC\setminus \Si_m$ such that they generate the fundamental group of the latter space. Since $\bC\setminus \Si_m$ is a wedge of $m(m+1)$ circles, we need to choose $m(m+1)$ loops in our system.  Due to the fact that the spectrum of $M_m(b)$ is real and simple for all real $b$ and that $\Si_m$ consists of complex-conjugate pairs, we suggest the following system of loops. For each level crossing point $b_j=u_j+Iv_j$ in the upper half plane, construct the loop $\ga_j$ which starts at $u_j\in \bR$; goes up almost to $b_j$; traverses counter-clockwise a small circle centered at $b_j$, and returns back to $u_j$ moving vertically down.      
 Observe that, in principle,  such loops can  pass through other level crossing points which is forbidden by definition. But our numerical experiments show that:
 
 \noindent
 (a) such a situation happens only when $u_j=0,$ i.e. for the purely imaginary level crossing points, and 
 
 \noindent
 (b) for the purely imaginary level crossing points, one can make an arbitrary small deformation of $\ga_j$ to avoid collision with other crossing points and the resulting monodromy will be independent of the deformation.

\begin{figure}

\begin{center}
\includegraphics[scale=0.2]{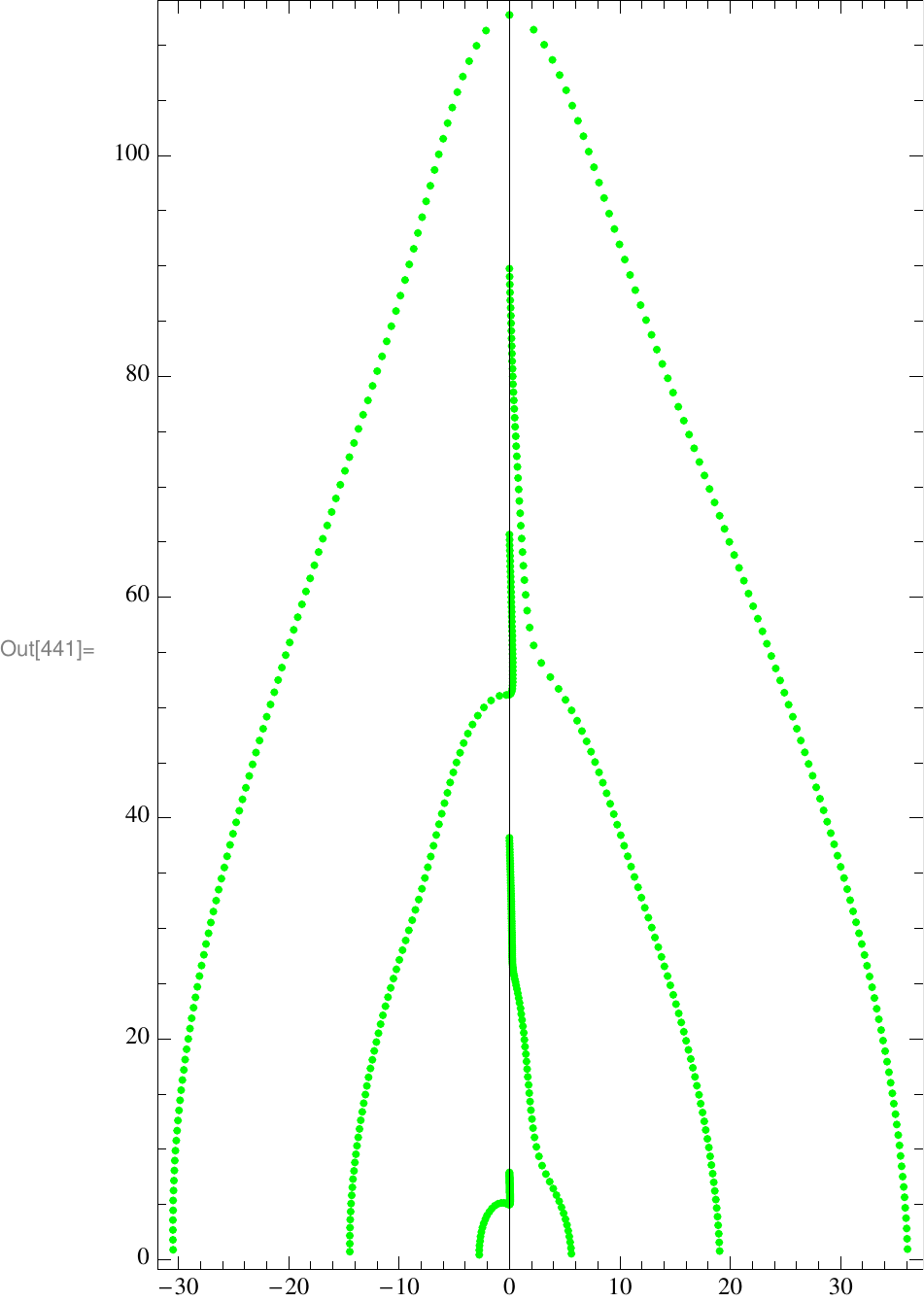}
\end{center}

\vskip 0.5cm
\begin{center}
\includegraphics[scale=0.2]{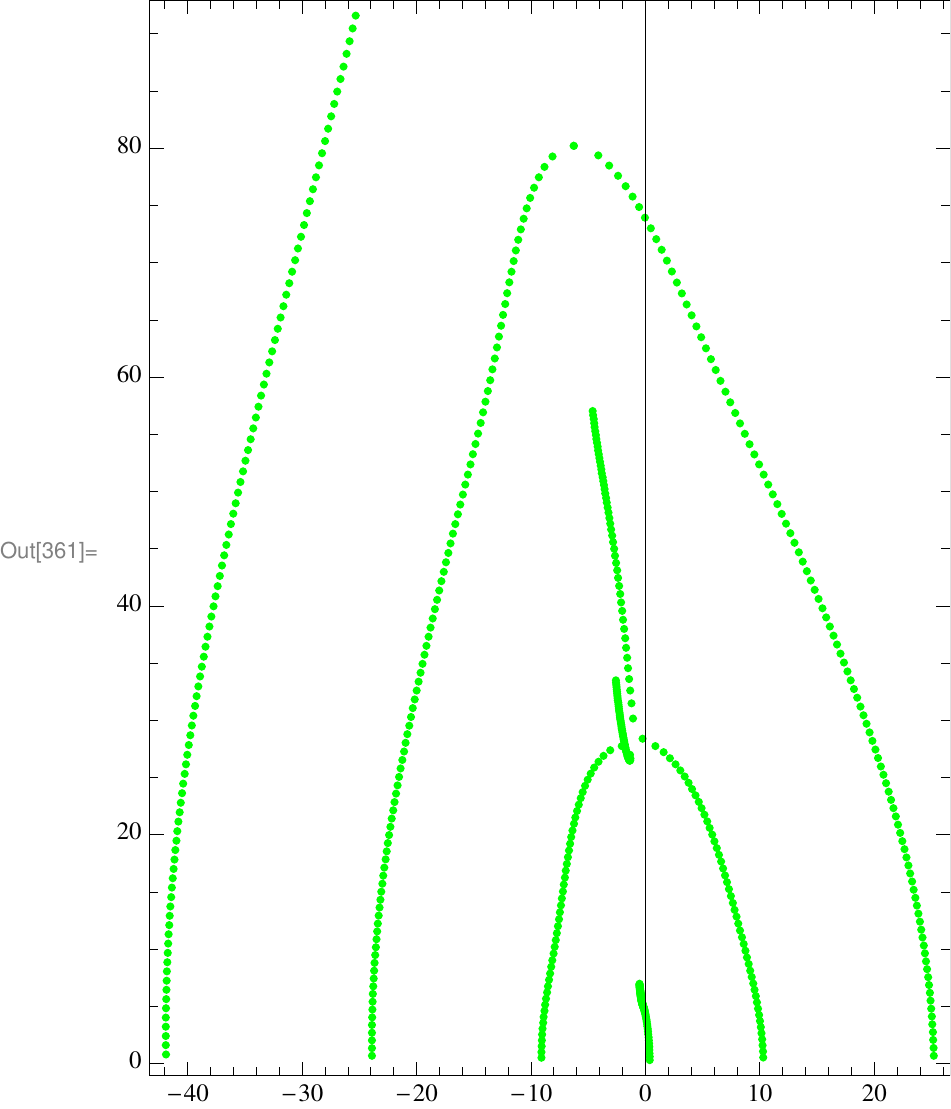}
\includegraphics[scale=0.2]{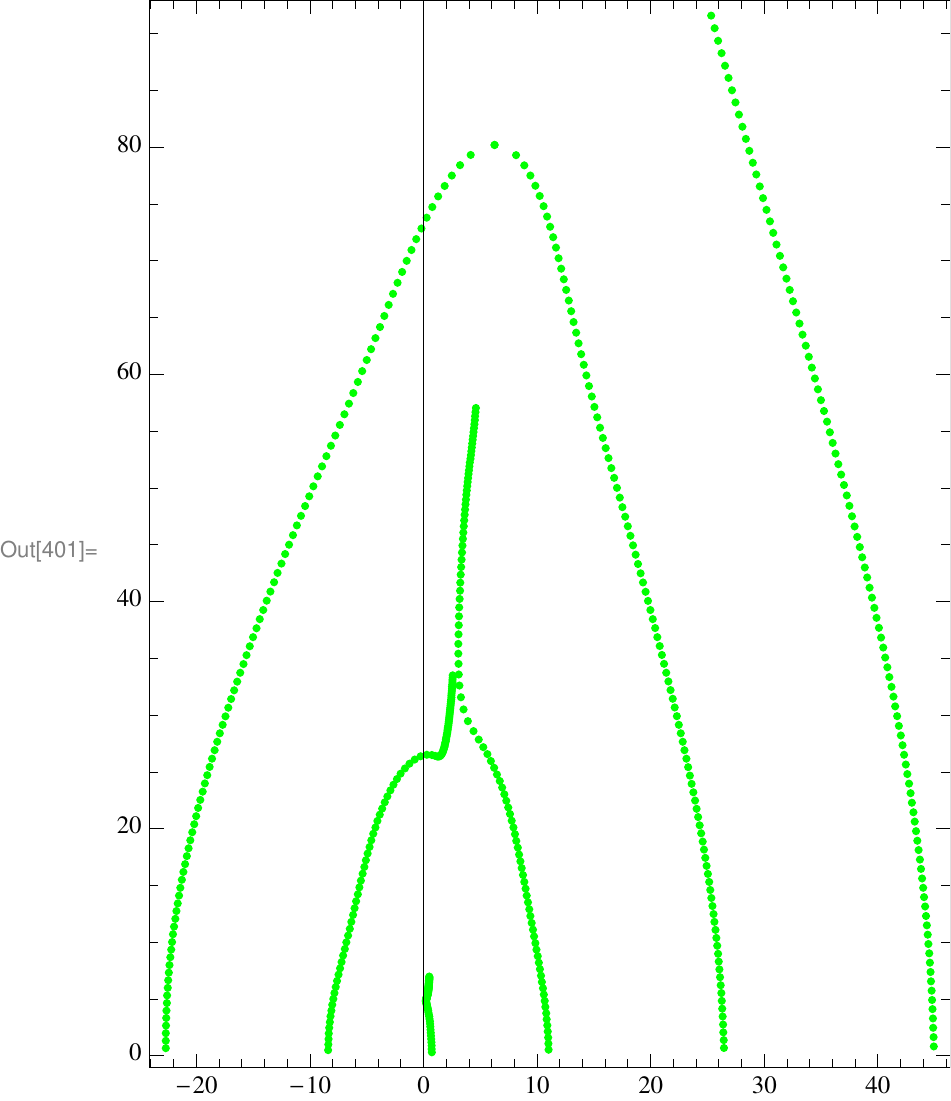}
\end{center}

\vskip 0.5cm
\begin{center}
\includegraphics[scale=0.23]{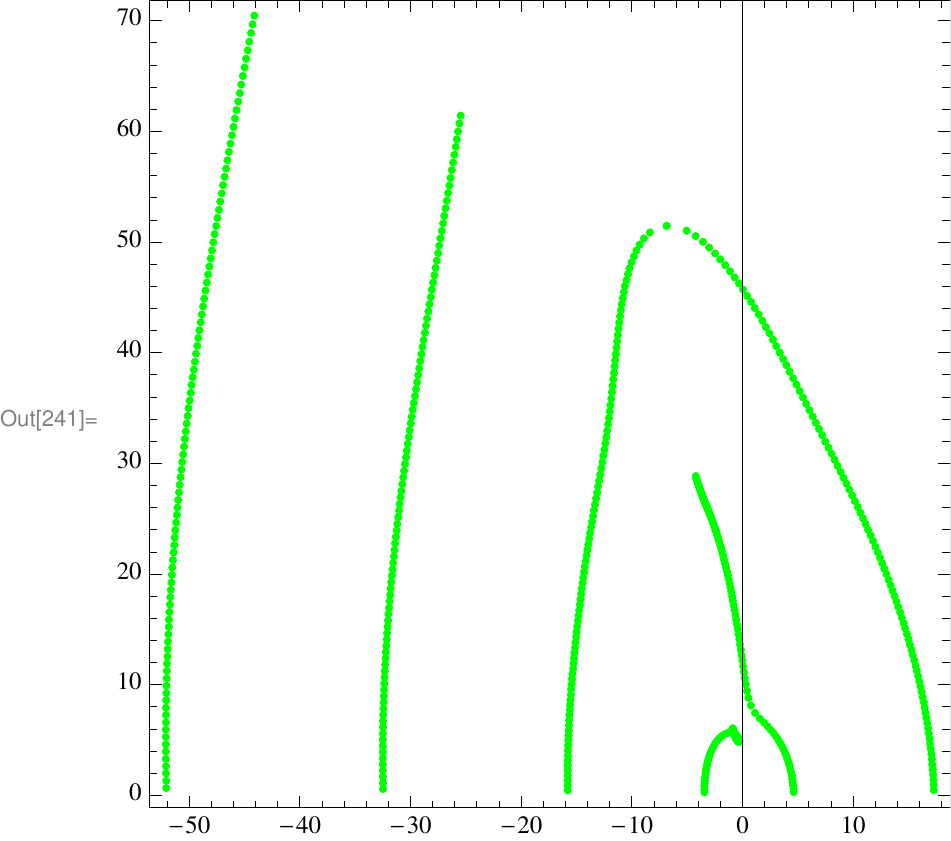}
\includegraphics[scale=0.25]{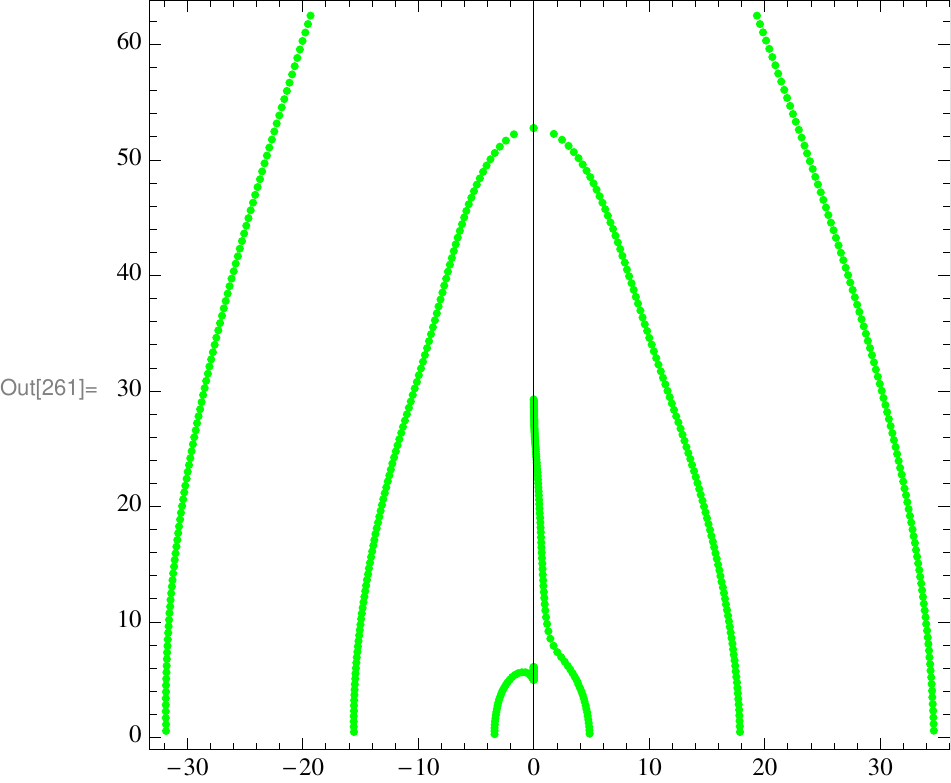}
\includegraphics[scale=0.23]{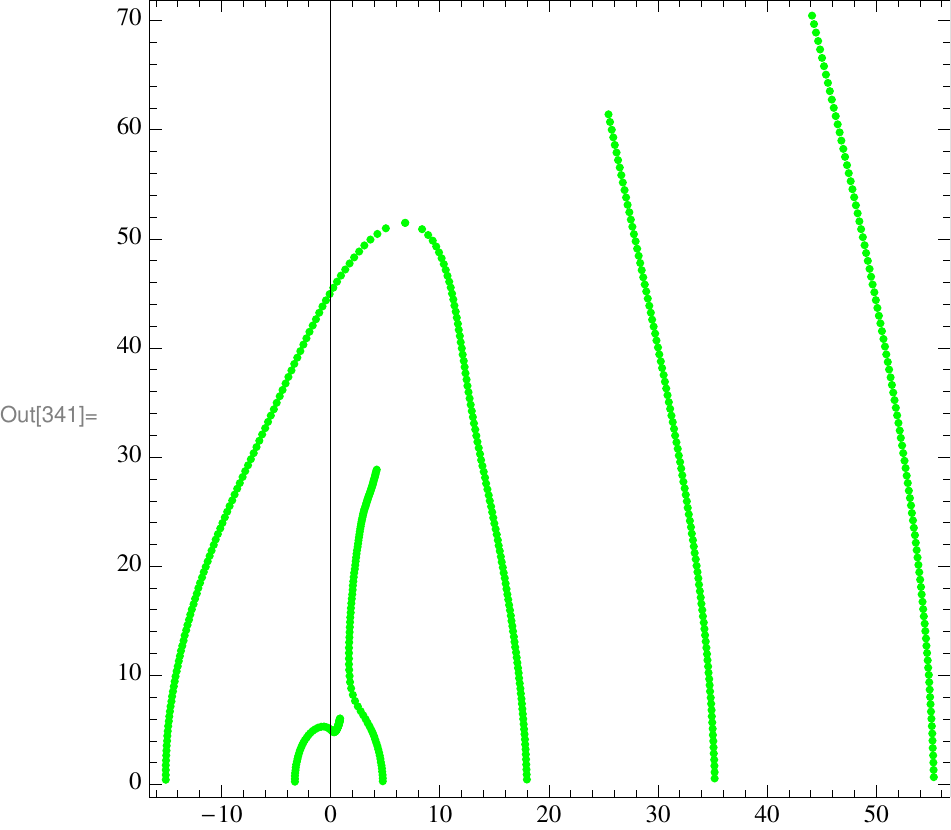}
\end{center}

\vskip 0.5cm
\begin{center}
\includegraphics[scale=0.24]{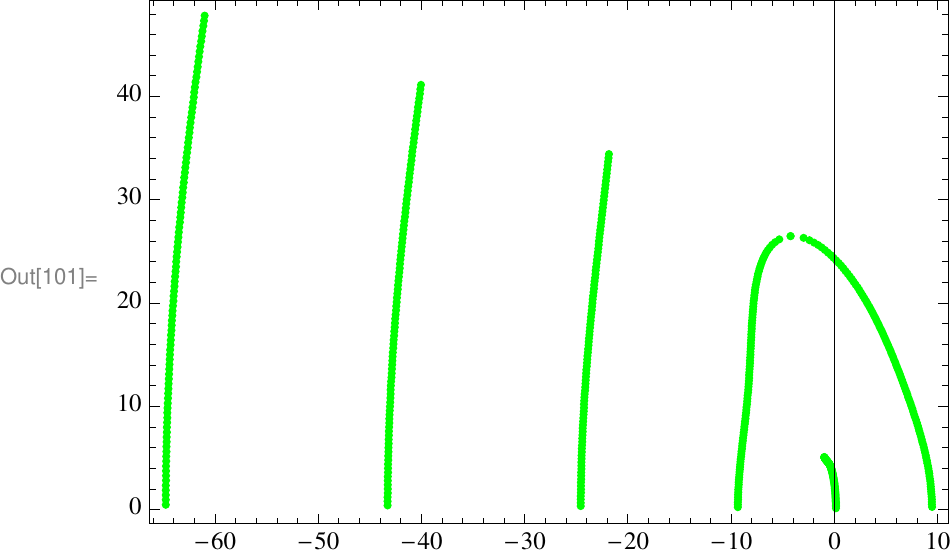}
\includegraphics[scale=0.25]{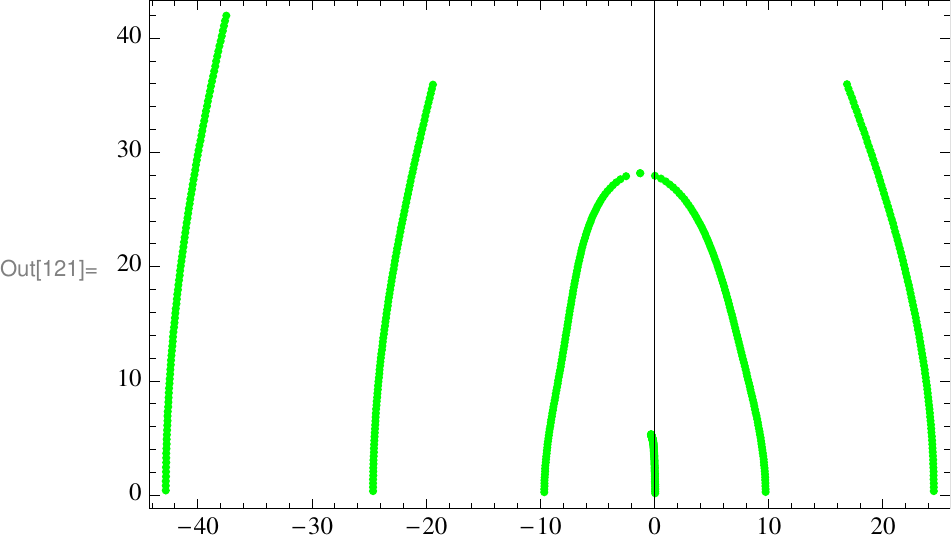}
\includegraphics[scale=0.25]{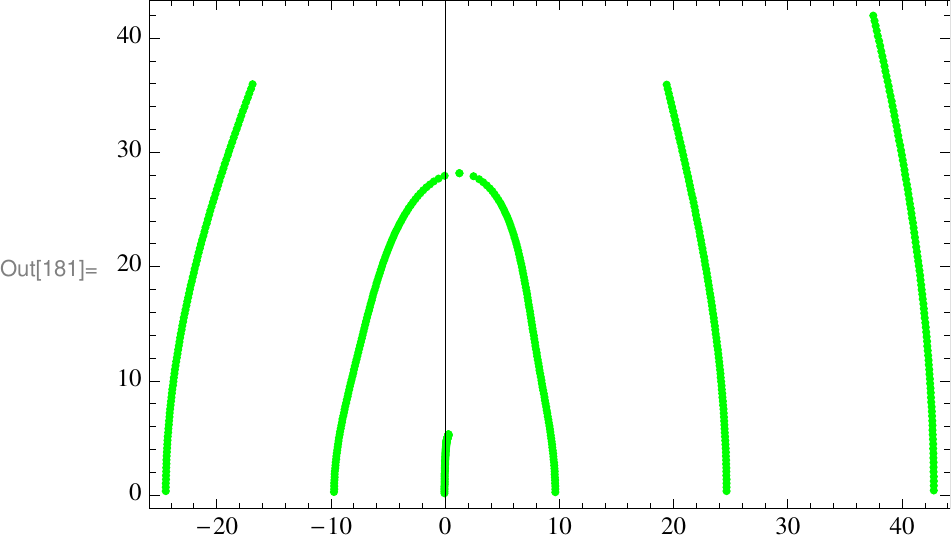}
\includegraphics[scale=0.24]{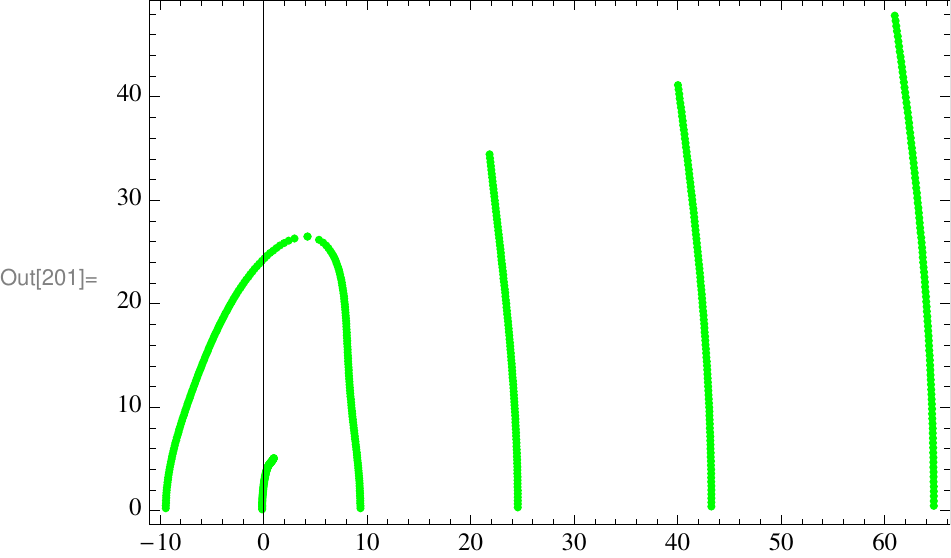}
\end{center}

\vskip 0.5cm
\begin{center}
\includegraphics[scale=0.25]{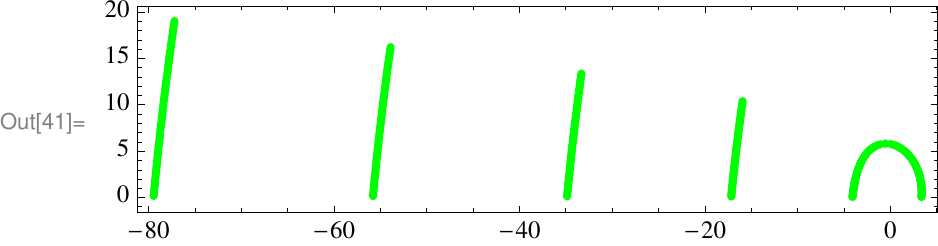}
\includegraphics[scale=0.25]{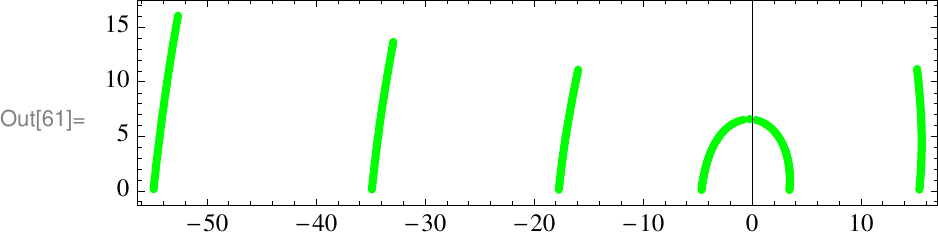}
\includegraphics[scale=0.25]{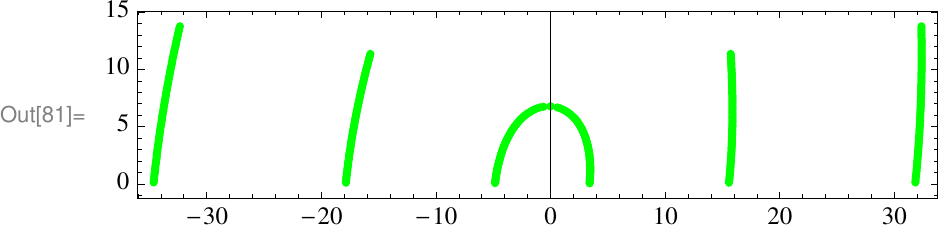}
\includegraphics[scale=0.25]{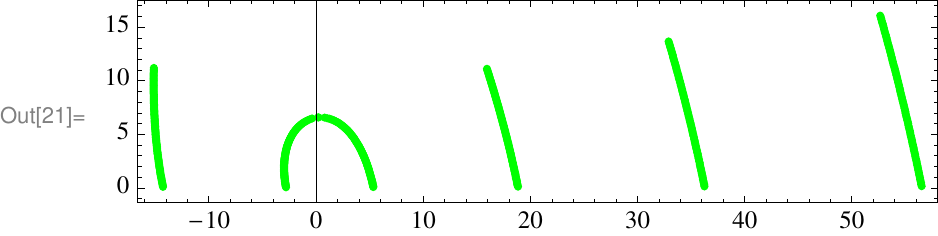}
\includegraphics[scale=0.25]{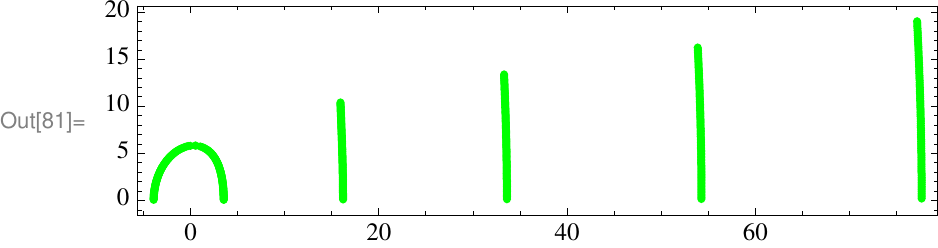}
\end{center}

\vskip 0.5cm
\caption{Monodromy for $m=5$.}
\label{fig4}
\end{figure}

\subsubsection{Monodromy of the algebraic spectrum of QES-sextic oscillator.}  By part (i) of the above conjecture about the structure of $\Si_m$, its level crossings points in the upper half plane are organized in $m$ ``horizontal" rows, where the  first row (i.e. the one closest to the real axis) contains $m$ points, the second row contains $m-1$ points, \dots, the $m$-th row contains one point, see Fig.~\ref{figromb}. 

If we order the level crossing points in the first row from left to right, i.e. according to increase of their real parts, then the corresponding monodromy operators look as follows. If we denote the left-most level crossing point in the first row by $b_1=u_1+i v_1$, then the permutation of the spectrum of $M_m(u_1)$ (which consists of $m+1$ real and distinct points) obtained, when $b$ traverses the loop $\ga_1$, is the simple transposition $(m,m+1)$. In other words,  two rightmost points of the spectrum of $M_m(u_1)$ will change places  when $b$ traverses  $\ga_1$. The  permutation corresponding to the second level crossing point in the first row is the simple transposition $(m-1,m)$.    The  permutation corresponding to the third level crossing point in the first row is the simple transposition $(m-2,m-1)$ etc. The case, $m=5$ is shown in the last row of Fig.~\ref{fig4}. 

Similarly, if we order level crossing points in the second row from left to right, i.e., according to increase of their real parts, then the corresponding monodromy operators look as follows. The  permutation corresponding to the left-most level crossing point in the second row is the  transposition $(m-1,m+1)$. The  permutation corresponding to the second level crossing point in the second row is the simple transposition $(m-2,m)$.    The  permutation corresponding to the third level crossing point in the first row is the simple transposition $(m-3,m-1)$ etc. The case, $m=5$ is shown in the fourth row of Fig.~\ref{fig4}. In  the third row we transpose pairs of eigenvalues separated by two intermediate eigenvalues etc. Finally, the transposition corresponding to the only level crossing point on the top is $(1,m+1)$, see the first row of  Fig.~\ref{fig4}.  

In other words, level crossing points of $M_m(b)$ is the upper half plane are in $1-1$-correspondence with all transpositions in the symmetric group on $m+1$ elements; those in the first row corresponding to simple transpositions, those in the second row corresponding to transpositions of pairs of elements which are separated by one intermediate element etc.  

\begin{remark} Pictures in Fig.~\ref{fig4}   show the trajectories of the eigenvalues of $M_m(b)$, when $b$ runs vertically from $u_j\in \bR$ to $b_j=u_j+iv_j$, $b_j$ being some level crossing point.  When $b=u_j$ all the eigenvalues of $M_m(b)$ are real.  When $b$ moves vertically up, the eigenvalues also move in the complex plane, and when $b$ reaches $b_j$ some of the eigenvalues collide.  One can trace back which initial 
eigenvalues collided and knowing that obtain the respective monodromy permutation. 
\end{remark}

\subsection{QES-sextic and quartic potential} This material is borrowed from an unpublished preprint \cite{ErGaIr}.  We will present a rescaling of the sequence of QES sextics which converges to the classical quartic oscillator. Recall that the latter oscillator corresponds to the Schr\"odinger equation
\begin{equation}\label{eq:quartic}
-y^{\prime\prime}(z)+(2z^4+\be z^2)y=\mu y
\end{equation}
with the initial conditions $y(\pm \infty)=0$ on the real axis. 

The spectrum of the classical quartic oscillator was studied in numerous papers since the early days of  quantum mechanics, see especially \cite{BW, Si, EG, Sh, Vo} and references therein.

To approximate the quartic potential by a sequence of QES-sextic potentials, set $n = 4m +  3$. Then the quasi-exactly solvable equation \eqref{eq:Sch} is related to
\begin{equation}\label{eq:last}
-y^{\prime\prime}(z) + [a^2z^6 + 2a\al z^4 + (\al^2 - an)z^2]y(z) = \mu y(z) 
\end{equation}
by the scaling $x = a^{1/4}z, \al = a^{1/2}b, \la = a^{1/2}\mu$. To approximate the quartic potential $2z^4 + \be z^2$ by the rescaled quasi-exactly solvable sextic potentials in \eqref{eq:last} as $m\to \infty$, let $\al = n^{1/3}(1 + sn^{-2/3}), 
a = n^{-1/3}(1 + tn^{-2/3})$. Then $b = \al/a^{1/2} = n^{1/2}(1 + (s - t/2)n^{-2/3} + O(n^{-4/3}).$ Substituting expression for $a$ and $\al$ into \eqref{eq:last}, we get the potential
$$n^{-2/3}(1+O(n^{-2/3)})z^6 +2(1+(s+t)n^{-2/3} +stn^{-4/3})z^4 +((2s-t)+s^2n^{-2/3})z^2.$$ 
Hence $\be = 2s- t = 2(n^{-1/2}b - 1)n^{2/3} + O(n^{-2/3}).$

Figure~\ref{FigS7} shows location of the level crossing points of $\la(b)$ for the rescaled
sextic, and Figure~\ref{fig10}, which is taken from \cite{DePh} shows the same for the quartic oscillator. (See also Fig.~1 in \cite{Sh}.)

\begin{figure}

\begin{center}
\includegraphics[scale=0.3]{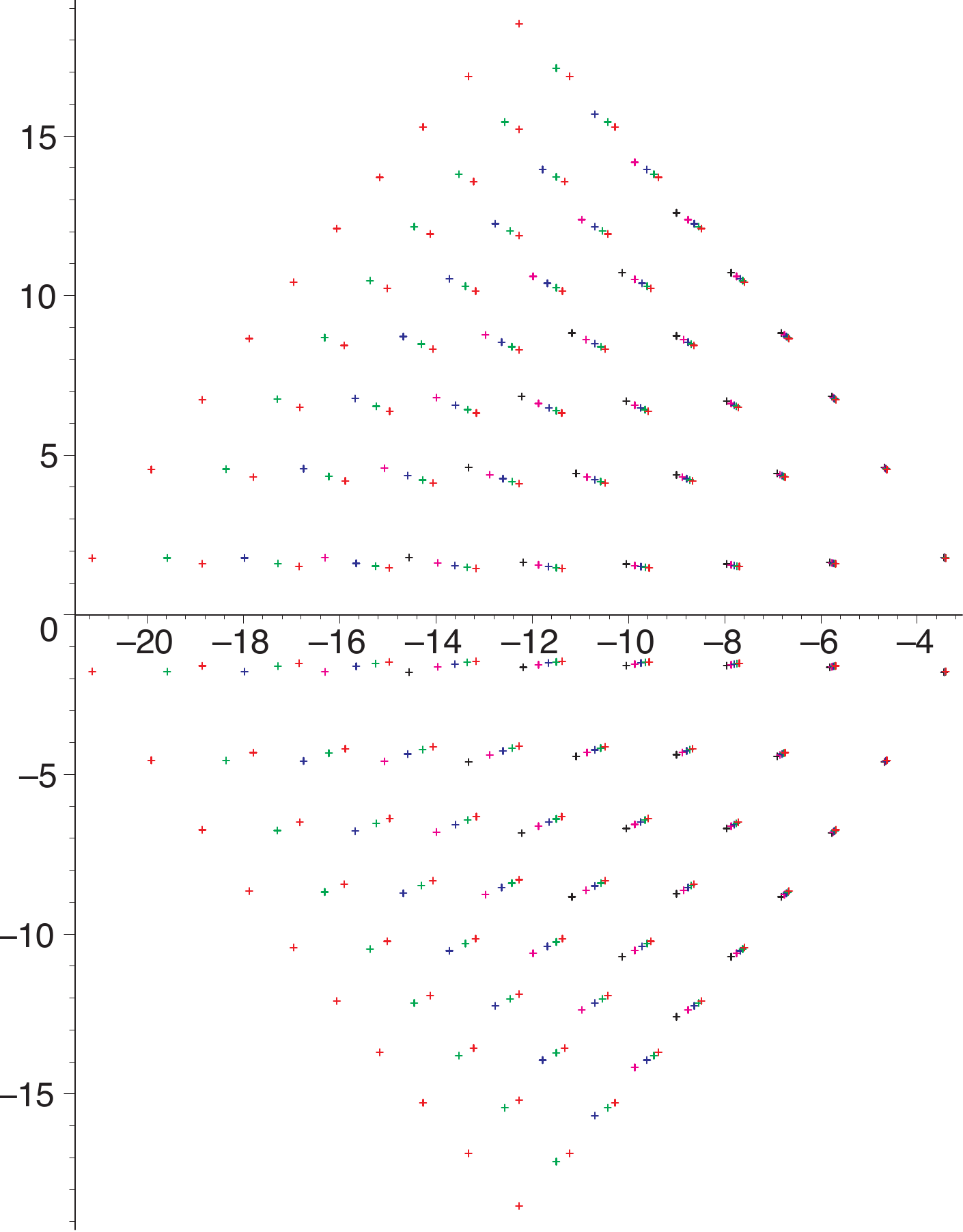} 
\end{center}


\caption{Level crossing points for rescaled QES sextics with $m = 6, 7, 8, 9, 10$.}
\label{FigS7}
\end{figure}

\begin{figure}

\begin{center}
\includegraphics[scale=0.35]{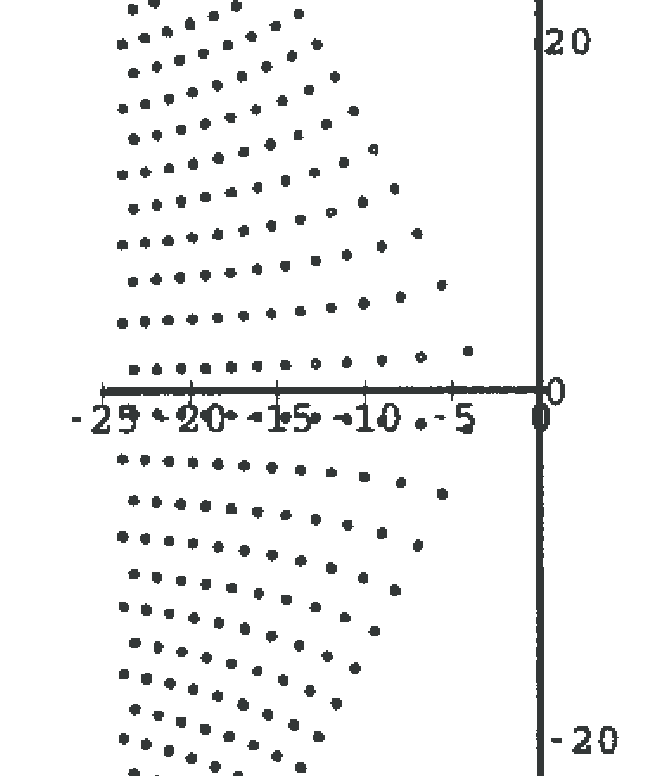} 
\end{center}


\caption{Level crossing points of the quartic oscillator, see \cite{DePh}.}
\label{fig10}
\end{figure}

\medskip\noindent
\subsubsection{Monodromy  of the full spectrum of the quartic oscillator.}  The level crossing set of the quartic oscillator naturally splits into ``horizontal" rows, see Fig.~\ref{fig10}.  Let us order 
level crossing points in each ``horizontal" row from right to left, i.e., in the order of decrease of their real parts. To each level crossing point in the upper half plane of Fig.~\ref{fig10}, we associate a loop similar to $\ga_j$'s above. Namely, we start from the point on the real axis with the same real part as the level crossing point under consideration, move straight up almost reaching the level crossing point, traverse a complete circle around it counterclockwise and go vertically down to the starting point on the real axis.  If we now denote by $b_{k,l}=u_{k,l}+iv_{k,l}$ the $k$-th level crossing point in the $l$-th ``horizontal" row, then the monodromy operator acting on the spectrum of \eqref{eq:quartic} with $\be=u_{k,l}$ is given by the transposition $(k,k+l)$. Here the latter spectrum is real, simple, countable and bounded from below. Its points are naturally labelled by positive integers in the order of increase. 

One can find a previous numerical study of the set of level points of the quartic oscillator and an attempt to determine its monodromy in \cite{Sh}. 

\section{Final Remarks}

\begin{problem} Find the linear differential operator of the minimal  order such that  the latter $\C_b(z)$  is its solution. (The existence of such an operator is guaranteed by the fact that $\C_b(z)$ is a Nilsson-class function, \cite {Nil}.) 
\end{problem}


\begin{thebibliography}{30}

\bibitem{BD} C.~Bender, G.~Dunne, {Quasi-exactly solvable systems and orthogonal polynomials}. J. Math. Phys. 37, (1996),  6--11. 

\bibitem{BDM} C.~Bender, G.~Dunne, M.~Moshe, {Semiclassical analysis of quasi-exact solvability}.  Phys. Rev A  55(2), (1997), 2625--2629. 

 
 \bibitem {BoSh} R.~B\o gvad, B.~Shapiro, 
On mother body measures with algebraic Cauchy transform, L'Enseignement Math., to appear. 


\bibitem{BW} C.~Bender, T.~Wu, {Anharmonic oscillator}. Phys. Rev. (2) 184, (1969),  1231--1260. 








\bibitem{DePh} E.~Delabaere, F.~Pham, Resurgent methods in semi-classical asymptotics, Annales de l'Inst. Poincar\'e, sect. A, t. 71 (1999) 1--94.

\bibitem{Em} A.~Emch, {On plane algebraic curves with a given system of foci}, Bull. AMS, vol 25 (1918), 157--161.

\bibitem {EG} A.~Eremenko and A.~Gabrielov, Analytic continuation of eigenfunctions of a quartic oscillator, Comm. math. phys., 287 (2009) 431--457.

\bibitem{ErGaIr} A.~Eremenko, A.~Gabrielov, Irreducibility of some spectral determinants, arXiv: 0904.1714.

\bibitem{KvA}
A.~B.~J.~Kuijlaars, W.~Van Assche, {\em The asymptotic zero distribution of orthogonal polynomials with varying recurrence coefficients,} J. Approx. Theory {\bf 99} (1999), 167--197.

\bibitem{Nil} N.~Nilsson, {\em Some growth and ramification properties of certain integrals on algebraic manifold},. Ark. Mat. 5 1965 463--476 (1965). 

\bibitem{Sa} G.~Salmon, A treatise on Higher Plane Curves. Intended as a Sequel to a Treatise on Conic Sections, Hodges and Smith, Third Edition, 1879.  

\bibitem{Si} B.~Simon, Coupling Constant Analyticity for the Anharmonic Oscillator, Ann. Phys., vol. 58 (1970), 76-136. 

\bibitem {Sh} P.~Shanley, {Spectral properties of the scaled quartic anharmonic oscillator}, Ann. Phys. 186 (1988), 292--324. 

\bibitem {ShTa1} B.~Shapiro, and M.~Tater,  Asymptotics of spectral polynomials, Acta Polytechnica vol 47(2-3) (2007) 32--35. 



\bibitem{ShTaQu} B.~Shapiro, M.~Tater, On spectral asymptotics of quasi-exactly solvable quartic and Yablonskii-Vorob'ev polynomials, arXiv:1412.3026, submitted.

\bibitem{SBD} V.Singh, S.N. Biswas and K. Datta, Anharmonic oscillator and the analytic theory of continued fractions, 
 Phys. Rev. D18(1978),
1901--1908.



\bibitem {Tu} A.~Turbiner, {Quasi-exactly solvable problems and $\mathfrak{sl}(2)$ algebra}, Comm. Math. Phys. 118 (1988). 467--474. 

\bibitem {TuUsh} A.~Turbiner, A,~Ushveridze, {Spectral singularities and the quasi exactly solvable problem}, Phys. Lett. 126 A (1987), 181--183. 

\bibitem {Ush} A.~Ushveridze, {Quasi-exactly solvable models in quantum mechanics}, Sov. J. Part.Nucl. 20 (1989). 504--528. 

\bibitem {UshB} A.~Ushveridze, {\em Quasi-exactly solvable models in quantum mechanics.}  Institute of Physics Publishing, Bristol, 1994. xiv+465 pp.


\bibitem {Vo} A.~Voros,  {The return of the quartic oscillator: the complex WKB method.} Ann. Inst. H. Poincar\'e Sect. A (N.S.) 39 (1983), no. 3, 211--338.






\end{thebibliography}
\end{document}